\title{Pricing and hedging game options in currency models with proportional transaction costs}
\author{Alet Roux\thanks{Department of Mathematics, University of York, Heslington, YO10 5DD, United Kingdom. Email: alet.roux@york.ac.uk}}
\DeclareMathOperator{\conv}{conv}
\DeclareMathOperator{\successors}{succ}
\DeclareMathOperator{\cone}{cone}
\theoremstyle{plain}
\newtheorem{theorem}{Theorem}
\newtheorem{proposition}[theorem]{Proposition}
\newtheorem{lemma}[theorem]{Lemma}
\theoremstyle{definition}
\newtheorem{definition}[theorem]{Definition}
\newtheorem{example}[theorem]{Example}
\newtheorem{construction}[theorem]{Construction}
\theoremstyle{remark}
\newtheorem{remark}[theorem]{Remark}
\numberwithin{theorem}{section}
\numberwithin{equation}{section}
\tikzstyle{axis}=[]
\newlength{\leveldistance}
\tikzstyle{fit level distance to}=[execute at begin picture={\settowidth{\leveldistance}{#1}\addtolength{\leveldistance}{1.5in}},execute at end picture={\setlength{\leveldistance}{1.5in}}]
\tikzstyle{override level distance}=[execute at begin picture={\setlength{\leveldistance}{#1}},execute at end picture={\setlength{\leveldistance}{1.5in}}]
\newlength{\defaultsiblingdistance} \setlength{\defaultsiblingdistance}{2\baselineskip}
\newlength{\siblingdistance} \setlength{\siblingdistance}{\defaultsiblingdistance}
\tikzstyle{max label lines}=[execute at begin picture={\setlength{\siblingdistance}{#1\baselineskip}\addtolength{\siblingdistance}{\baselineskip}}, execute at end picture={\setlength{\siblingdistance}{\defaultsiblingdistance}}]
\newcommand{\defaultsiblings}{2}
\newcommand{\siblings}{\defaultsiblings}
\tikzstyle{siblings}=[execute at begin picture={\renewcommand{\siblings}{#1}}, execute at end picture={\renewcommand{\siblings}{\defaultsiblings}}]
\tikzstyle{binary}=[siblings=2]
\tikzstyle{model}=[every node/.style={draw},grow'=right,->,
\tikzstyle{twostep}=[level 1/.style={level distance=\leveldistance, sibling distance=\siblings\siblingdistance}]
\newlength{\figurewidth}
\newlength{\figureheight}
\newcolumntype{d}[1]{D{.}{.}{#1}}
\newcommand{\twovector}[2]{\left(#1,#2\right)}
\newcommand{\threevector}[3]{\left(#1,#2,#3\right)}
\begin{document}

\maketitle

\begin{abstract}
The pricing, hedging, optimal exercise and optimal cancellation of game or Israeli options are considered in a multi-currency model with proportional transaction costs. Efficient constructions for optimal hedging, cancellation and exercise strategies are presented, together with numerical examples, as well as probabilistic dual representations for the bid and ask price of a game option.

\noindent\\ \emph{Keywords:} game options, game contingent claims, Israeli options, proportional transaction costs, currency model, superhedging, optimal exercise.

\noindent\\ \emph{MSC 2000:} Primary: 91B28, Secondary: 60G40, 91B30.
\end{abstract}

\section{Introduction}

The study of game options (also called Israeli options) date back to the seminal work of \citet{kifer2000}; the recent survey paper by \citet{Kifer2013} provides a complete chronology and literature review.
In addition to being of interest as derivative securities in their own right, game options have also played an important role in the study of other derivatives, for example callable options \cite[e.g.][]{Kuhn_Kyprianou2007} and convertible bonds \cite[e.g.][]{Kallsen_Kuhn2005,Bielecki_Crepey_Jeanblanc_Rutkowski2008,Wang_Jin2009}.

A game option is a contract between a writer (the \emph{seller}) and the holder (the \emph{buyer}) whereby a pre-specified payoff is delivered by the seller to the buyer at the earliest of the exercise time (chosen by the buyer) and the cancellation time (chosen by the seller). If the game option is cancelled before or at the same time as being exercised, then the seller also pays a cancellation penalty to the buyer. A game option is thus essentially an American option with the additional provision that the seller can cancel the option at any time before expiry, thus forcing early exercise at a price (the penalty). In practice, this feature tends to reduce 
costs for both the seller and the buyer, which makes game options an attractive alternative to American options.

It has been well observed that arbitrage pricing of European and American options in incomplete friction-free models and models with proportional transaction costs result in a range of arbitrage-free prices, bounded from below by the \emph{bid price} and from above by the \emph{ask price} \cite[see e.g.][]{follmer_schied2002,bensaid_lesne_pages_scheinkman1992,chalasani_jha2001,Roux_Zastawniak2015}. The same holds true for game options \citep{Kallsen_Kuhn2005,Kifer2013a}.

The pricing and hedging of game options in the presence of proportional transaction costs also share a number of other important properties with their European and American counterparts. (The properties for European and American options mentioned below were all established by \citet{Roux_Zastawniak2015} in a similar technical setting to the present paper.) Firstly, similar to European options, the hedging of game options is symmetric in the sense that the hedging problem for the buyer is exactly the same as the hedging problem for the seller (of a different game option with related payoff). \citet{Kifer2013a} observed this property in a two-asset model.

\citet{Kifer2013a} also showed that the probabilistic dual representations of the bid and ask prices of game options contain so-called randomised stopping times, a feature shared with the ask price of an American option \cite[for which it was first observed by][]{chalasani_jha2001}. Randomised (or mixed) stopping times have been studied by \citet{baxter_chacon1977} and many others, primarily as an aid to show the existence and properties of optimal ordinary stopping times. Randomised stopping times can be thought of as convex combinations of ordinary stopping times in a well-defined sense. The reason for the appearance of randomised stopping times in the probabilistic dual representations of the bid and ask prices is that, in the presence of transaction costs, the most expensive exercise (cancellation) strategy for the seller (buyer) of a game option to hedge against is not necessarily the same as the exercise (cancellation) strategy that is most attractive to the buyer (seller). As a result, it generally costs the seller (buyer) more to hedge against all exercise (cancellation) strategies than against the best exercise (cancellation) strategy for the buyer (seller). It turns out that the seller (buyer) must in effect be protected against a certain randomised exercise (cancellation) time.

Furthermore, similar to a long American option (i.e.~the buyer's case), the pricing and hedging problems for both the buyer and seller of game option are inherently non-convex. Thus ideas beyond convex duality are needed to study these problems. Nevertheless, the link between game options and short American options (i.e.~the seller's case, a convex problem) means that convex duality methods still have an important role to play in establishing the probabilistic dual representations.

In this paper we consider the pricing and hedging of game options in the num\'eraire-free discrete-time model of foreign exchange markets introduced by \citet{kabanov1999}, where proportional transaction costs are modelled as bid-ask spreads between currencies. This model has been well studied by \citet{kabanov_stricker2001b,kabanov_rasonyi_stricker2002,schachermayer2004} and others \cite[see also][]{Kabanov_Safarian2009}. 

The main aims of our work are twofold. Firstly, we present constructive algorithms for computing optimal exercise and cancellation times together with optimal hedging strategies for both the buyer and seller of a game option in this model. The algorithmic constructions in this paper are closely related to previously developed algorithms for the pricing and hedging of European and American options under proportional transaction costs \cite[see e.g.][]{Loehne_Rudloff2014,Roux_Zastawniak2009,Roux_Zastawniak2015}. These existing constructions yield efficient numerical algorithms; in particular they are known to price path-independent options in polynomial time in recombinant models (which typically have exponentially-sized state spaces). Numerical examples that illustrate the constructions are provided. Secondly, we establish probabilistic dual representations for the bid and ask prices of game options. In both these contributions we extend the recent results of \citet{Kifer2013a} for game options from two-asset to multi-asset models. Our proofs are rigorous, thus closing two gaps in the arguments of \citet{Kifer2013a}; see Remark~\ref{rem:expiration-date}, the comments below Proposition~\ref{prop:buyer-seller-symmetry} and Example~\ref{ex:counter} for further details.

The methods used in this paper come from convex analysis and dynamic programming, and in particular we will use recent results from \citet{Roux_Zastawniak2015} for an American option with random expiration date. The restriction to finite state space is motivated by the desire to produce computationally efficient algorithms for pricing and hedging. The restriction to discrete time is justified by a recent negative result by \citet{Dolinsky2013} that the super-replication price of a game option in continuous time under proportional transaction costs is the initial value of a trivial buy-and-hold superhedging strategy.

The structure of this paper is as follows. Section~\ref{sec:preliminaries} specifies the currency model with proportional transaction costs, and reviews various notions concerning randomised stopping times and approximate martingales. The main algorithms for pricing and hedging together with theoretical results for the seller's and buyer's position are presented in Section~\ref{sec:main}, with the proofs of all results deferred to Section~\ref{sec:proofs}. Section~\ref{sec:numerical} concludes the paper with three numerical examples.

\section{Preliminaries}
\label{sec:preliminaries}

\subsection{Proportional transaction costs}

 The num\'eraire free currency model of \citet{kabanov1999} has discrete trading dates $t=0,\ldots,T$\index{T@$T$} and is based on a finite probability space $(\Omega,\mathcal{F},P)$\index{Omega@$\Omega$}\index{F@$\mathcal{F}$}\index{P@$P$} with filtration $(\mathcal{F}_t)_{t=0}^T$.\index{Ft@$\mathcal{F}_t$} The model contains $d$ currencies (or \emph{assets}), and at any time $t$, one unit of currency $j=1,\ldots,d$ may be obtained by exchanging $\pi^{ij}_t>0$\index{piijt@$\pi^{ij}_t$} units of currency~$i=1,\ldots,d$. We assume that $\pi^{ii}_t=1$ for $i=1,\ldots,d$, i.e.~every currency may be freely exchanged for itself.
 
  Assume that the filtration $(\mathcal{F}_t)_{t=0}^T$ is generated by $(\pi^{ij}_t)_{t=0}^T$ for $i,j=1,\ldots,d$, and assume for simplicity that $\mathcal{F}_0=\{\emptyset,\Omega\}$, that $\mathcal{F}_T=\mathcal{F}=2^\Omega$ and that $P(\omega)>0$ for all $\omega\in\Omega$. Write $\mathcal{L}_\tau$\index{Lt@$\mathcal{L}_\tau$} for the family of $\mathcal{F}_\tau$-measurable $\mathbb{R}^d$-valued random variables for every stopping time $\tau$, and write $\mathcal{L}_\tau^+$\index{Ltplus@$\mathcal{L}_t^+$} for the family of non-negative random variables in $\mathcal{L}_\tau$.
  
  Let $\Omega_t$\index{Omegat@$\Omega_t$} be the set of atoms of $\mathcal{F}_t$ for $t=0,\ldots,T$. The elements of $\Omega_t$ are called the \emph{nodes}\index{node} of the model at time $t$. A node $\nu\in\Omega_{t+1}$ is called a \emph{successor} to a node $\mu\in\Omega_t$ if $\nu\subset\mu$. The collection of successors of $\mu$ is denoted $\successors \mu$\index{succ}. We shall implicitly and uniquely identify random variables $f$ in $\mathcal{L}_t$ with functions $\mu\mapsto f^\mu\in\mathbb{R}^d$ on~$\Omega_t$, and likewise every set $A\in\mathcal{L}_t$ that we will consider will be implicitly and uniquely defined by a set-valued mapping $\mu\mapsto A^\mu\subseteq\mathbb{R}^d$ on $\Omega_t$ such that
  \[
   A = \left\{f\in\mathcal{L}_t:f^\mu\in A^\mu \text{ for all }\mu\in\Omega_t\right\}.
  \]
 
 A portfolio $x=(x^1,\ldots,x^d)\in\mathcal{L}_\tau$ is called \emph{solvent}\index{solvency} at a stopping time $\tau$ if it can be exchanged into a portfolio in~$\mathcal{L}_\tau^+$ without any additional investment, i.e.~if there exist non-negative $\mathcal{F}_\tau$-measurable random variables~$\beta^{ij}$ for $i,j=1,\ldots,d$ such that
\[
 x^i + \sum_{j=1}^d \beta^{ji} - \sum_{j=1}^d\beta^{ij}\pi^{ij}_\tau \ge 0 \text{ for all } i=1,\ldots,d.
\]
Write $\mathcal{K}_\tau$\index{Kt@$\mathcal{K}_\tau$} for the family of solvent portfolios at time $\tau$; then the \emph{solvency cone}\index{solvency cone} $\mathcal{K}_\tau$ is the convex cone generated by the canonical basis\index{canonical basis} $e^1,\ldots,e^d$\index{ei@$e^i$} of $\mathbb{R}^d$ and the vectors $\pi^{ij}_\tau e^i-e^j$ for $i,j=1,\ldots,d$. Observe that $\mathcal{K}_\tau$ is a polyhedral cone, hence closed.

A \emph{self-financing trading strategy}\index{trading strategy} $y=(y_t)_{t=0}^T$ is an $\mathbb{R}^d$-valued predictable process with initial endowment $y_0\in\mathcal{L}_0$ satisfying $y_t - y_{t+1}\in\mathcal{K}_t$ for all $t=0,\ldots,T-1$. Denote the family of self-financing trading strategies by $\Phi$\index{Phi@$\Phi$}.

A self-financing trading strategy $y=(y_t)\in\Phi$ is called an \emph{arbitrage opportunity} if $y_0=0$ and there exists some $x\in\mathcal{L}^+_T\setminus\{0\}$ such that $y_T-x\in\mathcal{K}_T$. This definition of arbitrage is consistent with (though formally different to) that of \citet{schachermayer2004} and \citet{kabanov_stricker2001b} (who called it weak arbitrage). 

For any non-empty convex cone $A\subseteq\mathbb{R}^d$, write $A^\ast$\index{$\cdot^\ast$} for the \emph{positive polar} of~$A$,~i.e.
\[
 A^\ast := \{x\in\mathbb{R}^d:x\cdot y\ge0\text{ for all }y\in A\}.
\]
\begin{theorem}[\citet{kabanov_stricker2001b}] \label{th:ftap}
 The model is free of arbitrage if and only if there exists a probability measure~$\mathbb{P}$ equivalent to~$P$ and an $\mathbb{R}^d$-valued $\mathbb{P}$-martingale $S=(S_t)_{t=0}^T$ such that
 \[
  S_t\in\mathcal{K}^\ast_t\setminus\{0\}
 \text{ for }t=0,\ldots,T.\]
\end{theorem}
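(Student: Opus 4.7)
The statement is the classical fundamental theorem of asset pricing in this currency model, and I would prove it by the two standard directions, exploiting the finite-dimensional structure heavily.

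For sufficiency, suppose $(\mathbb{P},S)$ exists and take $y\in\Phi$ with $y_0=0$ and $y_T-x\in\mathcal{K}_T$ for some $x\in\mathcal{L}_T^+\setminus\{0\}$. The polar pairing gives $S_t\cdot(y_t-y_{t+1})\geq 0$ for each $t$, and combining this with predictability of $y_{t+1}$ together with the martingale property $E_\mathbb{P}[S_{t+1}\mid\mathcal{F}_t]=S_t$ shows that $(S_t\cdot y_t)_{t=0}^T$ is a $\mathbb{P}$-supermartingale starting at $0$. Since $\mathbb{R}_+^d\subseteq\mathcal{K}_T$ implies $\mathcal{K}_T^\ast\subseteq\mathbb{R}_+^d$, one has $S_T\cdot y_T\geq S_T\cdot x\geq 0$, so the chain $0\geq E_\mathbb{P}[S_T\cdot y_T]\geq E_\mathbb{P}[S_T\cdot x]\geq 0$ forces $S_T\cdot x=0$ a.s.; strict componentwise positivity of $S_T$ (ensured by the explicit construction below) then contradicts $x\neq 0$.

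For necessity I would use finite-dimensional separation. Interpreting each $\mathcal{K}_t$ through its $\mathcal{F}_t$-measurable selectors (per the paper's convention for $\mathcal{F}_t$-measurable sets) and telescoping $y_T=\sum_{t=0}^{T-1}(y_{t+1}-y_t)$ shows that, after allowing a terminal liquidation in $\mathcal{K}_T$, the cone of positions attainable from zero endowment is
\[
\tilde A := -\sum_{t=0}^{T}\mathcal{K}_t \subseteq \mathcal{L}_T,
\]
and absence of arbitrage is equivalent to $\tilde A\cap\mathcal{L}_T^+=\{0\}$. Because $\Omega$ is finite and each $\mathcal{K}_t^\mu$ is finitely generated (hence polyhedral), $\tilde A$ is itself polyhedral and closed in the finite-dimensional space $\mathcal{L}_T$; Stiemke's lemma then produces $Z\in\mathcal{L}_T$ with $Z(\omega)>0$ componentwise such that $Z\cdot y\leq 0$ for all $y\in\tilde A$.

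From $Z$ I would construct $(\mathbb{P},S)$ explicitly: set $\mathbb{P}(\omega):=Z^1(\omega)/\sum_{\omega'}Z^1(\omega')$, equivalent to $P$ since $Z^1>0$; then take $S_T(\omega):=Z(\omega)/\mathbb{P}(\omega)$ and $S_t:=E_\mathbb{P}[S_T\mid\mathcal{F}_t]$. Martingality holds by definition, and $S_t^1$ is a positive constant, so $S_t\neq 0$. For the polar membership, any $\xi\in\mathcal{K}_t$ satisfies $-\xi\in\tilde A$, so the separating inequality together with the tower property yields $E_\mathbb{P}[S_t\cdot\xi]=E_\mathbb{P}[S_T\cdot\xi]=Z\cdot\xi\geq 0$; localising at each atom $\mu\in\Omega_t$ delivers $S_t(\mu)\in(\mathcal{K}_t^\mu)^\ast$. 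The main obstacle is the separation step, but in this finite setting polyhedrality trivialises closedness of $\tilde A$ and makes Stiemke's lemma directly applicable, which is why the result is as clean as stated.
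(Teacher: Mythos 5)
The paper does not prove Theorem~\ref{th:ftap}; it is quoted verbatim from \citet{kabanov_stricker2001b}, so there is no in-text argument to compare yours against. Your reconstruction is the standard finite-$\Omega$ argument: the sufficiency direction via the supermartingale property of $(S_t\cdot y_t)_t$, and the necessity direction via polyhedrality of $\tilde A=-\sum_{t=0}^T\mathcal{K}_t$ plus strict separation from $\mathcal{L}_T^+$ followed by an explicit construction of $(\mathbb{P},S)$. The necessity half, including the localisation of $E_\mathbb{P}[S_t\cdot\xi]\ge 0$ to atoms of $\mathcal{F}_t$ and the normalisation $S_t^1\equiv c>0$, is sound.

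There is, however, a genuine logical flaw in your sufficiency argument. After deducing $S_T\cdot x=0$ a.s., you conclude $x=0$ by invoking ``strict componentwise positivity of $S_T$ (ensured by the explicit construction below)''. That is circular: in the sufficiency direction $(\mathbb{P},S)$ is given, not constructed, and the hypothesis only says $S_T\in\mathcal{K}_T^\ast\setminus\{0\}$, which is prima facie weaker than $S_T\gg 0$. The correct justification is structural and does not depend on any construction: the generators $\pi^{ij}_T e^i-e^j\in\mathcal{K}_T$ force $\pi^{ij}_T S_T^i\geq S_T^j$ for all $i,j$; hence if $S_T^{i_0}=0$ for some $i_0$ then $S_T^j\leq 0$ for every $j$, and combined with $S_T\in(\mathbb{R}_+^d)^\ast=\mathbb{R}_+^d$ this gives $S_T=0$, contradicting $S_T\in\mathcal{K}_T^\ast\setminus\{0\}$. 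In other words $\mathcal{K}_t^\ast\setminus\{0\}$ lies in the open positive orthant because $\pi^{ij}_t>0$; once you replace the parenthetical with this observation, the proof is complete.
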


Any pair $(\mathbb{P},S)$ satisfying the conditions of Theorem~\ref{th:ftap} is called an \emph{equivalent martingale pair}. Denote the family of equivalent martingale pairs by $\mathcal{P}$; then $\mathcal{P}\neq\emptyset$ in the absence of arbitrage.

\begin{remark}
Theorem~\ref{th:ftap} and the other results in this paper can equivalently be formulated in terms of consistent pricing processes $(Z_t)_{t=0}^T$ where \[Z_t=S_t\mathbb{E}_P\left(\left.\frac{d\mathbb{P}}{dP}\right|\mathcal{F}_t\right)\text{ for all }t=0,\ldots,T.\] \citet[pp.~24--25]{schachermayer2004} provides further details on this equivalence.
\end{remark}

Assume for the remainder of this paper that the model contains no arbitrage.

\subsection{Randomised stopping times}

\begin{definition}[Randomised stopping time]
A \emph{randomised (or mixed) stopping time}\index{randomised stopping time} \mbox{$\chi=(\chi_t)_{t=0}^T$} is an adapted nonnegative process satisfying
\[
 \sum_{t=0}^T\chi_t = 1.
\]
\end{definition}
Denote the set of randomised stopping times by $\mathcal{X}$\index{Y@$\mathcal{X}$}, and the set of (ordinary) stopping times with values in $0,\ldots,T$ by $\mathcal{T}$\index{T@$\mathcal{T}$}. Every stopping time $\tau\in\mathcal{T}$ corresponds to a randomised stopping time $\chi^\tau=(\chi^\tau_t)_{t=0}^T$ defined as\index{chitau@$\chi^\tau$}\index{$\cdot^\tau$}
\[
 \chi^\tau_t:=\mathbf{1}_{\{\tau=t\}}\text{ for }t=0,\ldots,T,
\]
where $\mathbf{1}$\index{1@$\mathbf{1}$} is the indicator function on $\Omega$. The set $\mathcal{X}$ is the convex hull of \mbox{$\{\chi^\tau:\tau\in\mathcal{T}\}$} 
and so $\mathcal{X}$ can be thought of as the linear relaxation of the set of ordinary stopping times in this sense.

Fix any process $A=(A_t)_{t=0}^T$ and randomised stopping time $\chi\in\mathcal{X}$. Define the processes $\chi^\ast=(\chi^\ast_t)_{t=0}^T$\index{chiast@$\chi^\ast$} and $A^{\chi\ast}=(A^{\chi\ast}_t)_{t=0}^T$\index{$\cdot^{\chi\ast}$} by
\begin{align*}
 \chi^\ast_t&:=\sum_{s=t}^T\chi_s, & A^{\chi\ast}_t&:=\sum_{s=t}^T\chi_sA_s
\end{align*}
for $t=0,\ldots,T$.
 For convenience also define $\chi^\ast_{T+1}:=0$ and \mbox{$A^{\chi\ast}_{T+1}:=0$}. Observe that $\chi^\ast$ is a predictable process since
\[
 \chi^\ast_t = 1-\sum_{s=0}^{t-1}\chi_s\text{ for }t=1,\ldots,T.
\]
The \emph{value of $A$ at $\chi$} is defined as\index{$\cdot_\chi$}
\[
 A_\chi :=A^{\chi\ast}_0= \sum_{t=0}^T\chi_tA_t.\]

Observe that if $\chi=\chi^\tau$ for some $\tau\in\mathcal{T}$, then 
\begin{align*}
 \chi^{\tau\ast}_t &= \mathbf{1}_{\{\tau\ge t\}}, & A^{\chi^{\tau}\ast}_t&=\sum_{s=t}^T\mathbf{1}_{\{\tau=s\}}A_s = A_\tau\mathbf{1}_{\{\tau\ge t\}}
\end{align*}
for $t=0,\ldots,T$, and in particular $A_{\chi^\tau}=A_\tau$.

\begin{definition}[Approximate martingale pair] \label{def:approx-martingale}
Fix any $\chi\in\mathcal{X}$. A pair $(\mathbb{P},S)$ consisting of a probability measure~$\mathbb{P}$ and an adapted $\mathbb{R}^d$-valued process $S$ is called a \emph{$\chi$-approximate martingale pair} if 
\begin{align*}
S_t&\in\mathcal{K}^\ast_t\setminus\{0\}, & \mathbb{E}_\mathbb{P}(S^{\chi_\ast}_{t+1}|\mathcal{F}_t)&\in\mathcal{K}^\ast_t
\end{align*}
for all $t=0,\ldots,T$. If $\mathbb{P}$ is in addition equivalent to $P$, then $(\mathbb{P},S)$ is called a \emph{$\chi$-approximate equivalent martingale pair}.\index{approximate equivalent martingale pair} 
\end{definition}
Denote the family of $\chi$-approximate equivalent pairs $(\mathbb{P},S)$ by~$\mathcal{P}(\chi)$\index{Pchi@$\mathcal{P}(\chi)$} and the set of $\chi$-approximate pairs by $\bar{\mathcal{P}}(\chi)$.\index{Pchi@$\bar{\mathcal{P}}(\chi)$} For any $\chi\in\mathcal{X}$ and $i=1,\ldots,d$ define\index{Pchii@$\bar{\mathcal{P}}^i(\chi)$}\index{Pchii@$\mathcal{P}^i(\chi)$}
\begin{align*}
 \mathcal{P}^i(\chi) &:=\{(\mathbb{P},S)\in\mathcal{P}(\chi):S_t^i=1\text{ for }t=0,\ldots,T\},\\
 \bar{\mathcal{P}}^i(\chi) &:=\{(\mathbb{P},S)\in\bar{\mathcal{P}}(\chi):S_t^i=1\text{ for }t=0,\ldots,T\}.
\end{align*}
Since $\mathcal{P}\subseteq\mathcal{P}(\chi)\subseteq\bar{\mathcal{P}}(\chi)$ and $\mathcal{K}^\ast_t$ is a cone for all $t=0,\ldots,T$, the no-arbitrage assumption implies that~$\mathcal{P}^i(\chi)$ and $\bar{\mathcal{P}}^i(\chi)$ are non-empty.

\begin{definition}[Truncated stopping time]
 Fix any $\chi\in\mathcal{X}$, $\sigma\in\mathcal{T}$. The \emph{truncated randomised stopping time} $\chi\wedge\sigma=((\chi\wedge\sigma)_t)_{t=0}^T$\index{$\cdot\wedge\cdot$}\index{chiwedgesigma@$\chi\wedge\sigma$} is defined as
\[
 (\chi\wedge\sigma)_t := \chi_t\mathbf{1}_{\{t<\sigma\}} + \chi_t^\ast\mathbf{1}_{\{t=\sigma\}}\text{ for }t=0,\ldots,T.
\]
\end{definition}
The process $\chi\wedge\sigma$ is clearly adapted and nonnegative, and moreover
\[
 \sum_{t=0}^T(\chi\wedge\sigma)_t = \sum_{t=0}^{\sigma-1}\chi_t + \chi^\ast_\sigma = 1,
\]
so it is indeed a randomised stopping time. If $\chi=\chi^\tau$ for some $\tau\in\mathcal{T}$, then clearly $(\chi\wedge\sigma)_t=\mathbf{1}_{\{\sigma\wedge\tau=t\}}$
for all $t=0,\ldots,T$, and so $\chi^\tau\wedge\sigma=\chi^{\tau\wedge\sigma}$. Denote the set of randomised stopping times truncated at $\sigma\in\mathcal{T}$ by \index{Ywedgesigma@$\mathcal{X}\wedge\cdot$}
\[
 \mathcal{X}\wedge\sigma:=\{\chi\wedge\sigma:\chi\in\mathcal{X}\}.
\]

\section{Main results and discussion}
\label{sec:main}

In this section we formally define what we mean by a game option, and present the constructions and main results.

\begin{definition}[Game option]\index{game option}\label{def:game option}
 A \emph{game option} is a derivative security that is \emph{exercised}\index{exercise} at a stopping time $\tau\in\mathcal{T}$ chosen by the buyer and \emph{cancelled}\index{cancellation} at a stopping time $\sigma\in\mathcal{T}$ chosen by the seller. At time $\sigma\wedge\tau$ the buyer receives the payoff
$
 Q_{\sigma\tau} $
from the seller, where 
\begin{equation}\label{eq:def:eta}
 Q_{st} \equiv Q^{Y,X,X'}_{st} := Y_t\mathbf{1}_{\{s>t\}} + X_s\mathbf{1}_{\{s<t\}} + X'_s\mathbf{1}_{\{s=t\}}
\end{equation}
for all $s,t=0,\ldots,T$, and $Y=(Y_t)_{t=0}^T$, $X=(X_t)_{t=0}^T$ and $X'=(X'_t)_{t=0}^T$ are adapted $\mathbb{R}^d$-valued processes such that 
\begin{align}\label{eq:ass:penalties}
 X_t-X'_t&\in\mathcal{K}_t, & X'_t-Y_t&\in\mathcal{K}_t
\end{align}
for all $t=0,\ldots,T$.
\end{definition}

In the event that the buyer exercises before the option is cancelled, i.e.~on $\{\tau<\sigma\}$, the buyer receives the payoff~$Y_\tau$ from the seller at his exercise time $\tau$. If the seller cancels the option before it is exercised, i.e.~on $\{\sigma<\tau\}$, the seller is required to deliver the payoff~$X_\sigma$ to the buyer at the cancellation time $\sigma$, which consists of $Y_\sigma$ and a penalty
\[X_\sigma-Y_\sigma=(X_\sigma-X'_\sigma) + (X'_\sigma-Y_\sigma) \in\mathcal{K}_\sigma.\]
In the event that the option is exercised and cancelled simultaneously, i.e.~on $\{\sigma=\tau\}$, the seller pays~$X'_\sigma$ to the buyer, consisting of $Y_\sigma$ and a penalty \mbox{$X'_\sigma-Y_\sigma\in\mathcal{K}_\sigma$}. The  assumptions~\eqref{eq:ass:penalties} mean that, at any time $t$, the portfolio $X_t$ payable on cancellation is at least as attractive to the buyer as the portfolio $X'_t$ payable in the event of simultaneous cancellation and exercise, which in turn is at least as attractive as the portfolio~$Y_t$ payable on exercise. It is therefore clear from Definition~\ref{def:game option} that a game option is essentially an American option with payoff process~$Y$ with the additional feature that it may be cancelled by the seller at any time (upon payment of a cancellation penalty). 

\begin{remark} \label{rem:convention}
Definition~\ref{def:game option} is slightly more general than the usual approach followed in the literature \cite[see~e.g.][]{kifer2000,Kifer2013a}, where the standard assumption is that no penalty is paid if cancellation and exercise takes place simultaneously (i.e.~$X'_t=Y_t$ for all $t$) and that no penalty is paid on maturity (i.e.~$X_T=X'_T=Y_T$). The motivation for the generalization in the present paper is that it enables elegant exploitation of the symmetry between the seller's and buyer's hedging problems; see Proposition~\ref{prop:buyer-seller-symmetry}. Nevertheless, from a practical point of view, the pricing and hedging problems depend on $X$ only through $(X_t)_{t<T}$ and on $X'$ only through $X'_T$; see the key Constructions~\ref{constr:writer} and~\ref{const:buyer} as well as Lemma~\ref{lem:hedge-equivalence}.
\end{remark}

\begin{remark} \label{rem:negative}
The property \eqref{eq:ass:penalties} imposes an ordering on the payoffs in the various scenarios for the seller and buyer, but there is no requirement in Definition~\ref{def:game option} that any of the payoffs $X_t$, $Y_t$ and $X'_t$ are solvent portfolios. The absence of such a solvency requirement makes it easy to adapt to the buyer's case, where in practice the payoffs tend to be ``negative'' in that they correspond to portfolios received rather than delivered. Typical cases are illustrated in Examples~\ref{ex:toy} and~\ref{ex:counter}.
\end{remark}

\subsection{Pricing and hedging for the seller}

A hedging strategy for the seller of a game option $(Y,X,X')$ comprises a cancellation time $\sigma$ and a self-financing trading strategy $y$ that allows the seller to the deliver the payoff without loss at any exercise time chosen by the buyer.

\begin{definition}[Hedging strategy for the seller]
 A \emph{hedging strategy for the seller} is a pair $(\sigma,y)\in\mathcal{T}\times\Phi$ satisfying
\begin{equation} \label{eq:def:seller-hedge}
 y_{\sigma\wedge\tau} - Q_{\sigma\tau}\in\mathcal{K}_{\sigma\wedge\tau} \text{ for all }\tau\in\mathcal{T}.
\end{equation}
\end{definition}

There exists at least one hedging strategy for the seller. Indeed, fixing $i=1,\ldots,d$ and defining
\[
 m:=\max\left\{\sum_{j=1}^d\pi^{ij}_t(\omega)\max\{\lvert Y^j_t(\omega)\rvert,\lvert X^j_t(\omega)\rvert,\lvert X'^j_t(\omega)\rvert\}:t=0,\ldots,T,\omega\in\Omega\right\},
\]
the (possibly expensive) buy-and-hold strategy  $y=(y_t)_{t=0}^T$ with $y_t=me^i$ for \mbox{$t=0,\ldots,T$} hedges the game option for the seller with any choice of the cancellation time $\sigma\in\mathcal{T}$.

Consider now the following construction.

\begin{construction} \label{constr:writer}
 Construct adapted set-valued mappings $(\mathcal{Y}^a_t)_{t=0}^T$, $(\mathcal{X}^a_t)_{t=0}^T$, $(\mathcal{U}^a_t)_{t=0}^T$, $(\mathcal{V}^a_t)_{t=0}^T$, $(\mathcal{W}^a_t)_{t=0}^T$, $(\mathcal{Z}^a_t)_{t=0}^T$ as follows. For all $t=0,\ldots,T$ let
\begin{align}
 \mathcal{Y}^a_t &:= Y_t + \mathcal{K}_t,&
 \mathcal{X}^a_t &:= \begin{cases}
                    X'_T + \mathcal{K}_T & \text{if } t=T,\\
                    X_t + \mathcal{K}_t & \text{if }t<T.
                   \end{cases} \label{eq:constr:A}
\end{align}
Define 
\begin{align*}
\mathcal{W}^a_T := \mathcal{V}^a_T &:= \mathcal{L}_T,&
\mathcal{Z}^a_T&:=\mathcal{X}^a_T.
\end{align*}
For $t=T-1,\ldots,0$ define by backward iteration
\begin{align}
 \mathcal{W}^a_t &:= \mathcal{Z}^a_{t+1} \cap \mathcal{L}_t, \label{eq:constr:C}\\
 \mathcal{V}^a_t &:= \mathcal{W}^a_t + \mathcal{K}_t, \label{eq:constr:B}\\
 \mathcal{Z}^a_t &:= (\mathcal{V}^a_t\cap\mathcal{Y}^a_t)\cup\mathcal{X}^a_t. \label{eq:constr:D}
\end{align}
\end{construction}

For each $t=0,\ldots,T$, the set $\mathcal{Y}^a_t$ is the collection of portfolios in $\mathcal{L}_t$ that allows the seller to settle the option in the event that the buyer exercises at time~$t$ and the seller does not cancel the option at time $t$. The set~$\mathcal{X}^a_t$ is the collection of portfolios that allows the seller to settle the option upon cancellation at time~$t$, irrespective of whether the buyer exercises at time $t$ or not. The property \eqref{eq:ass:penalties} gives that
\[
 \mathcal{X}^a_t = (X_t+\mathcal{K}_t)\cap(X'_t+\mathcal{K}_t) \text{ for }t<T.
\]
The relation $\mathcal{X}^a_T=X'_T + \mathcal{K}_T$ follows from the fact that any cancellation at the final time~$T$ must be matched by simultaneous exercise.

The following result shows that $\mathcal{Z}^a_0$ is the set of initial endowments that allow the seller to hedge the game option.

\begin{proposition}\label{prop:hedge-equivalents}
We have
\begin{equation} \label{eq:Z0=hedging}
\mathcal{Z}^a_0 = \{y_0:(\sigma,y)\text{ hedges } (Y,X,X')\text{ for the seller}\}.
\end{equation}
\end{proposition}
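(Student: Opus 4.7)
The plan is to strengthen the statement and proceed by backward induction on $t=T,T-1,\ldots,0$, showing that $\mathcal{Z}^a_t$ coincides with the set of $x\in\mathcal{L}_t$ for which there exists a pair $(\sigma,y)\in\mathcal{T}\times\Phi$ with $\sigma\ge t$, $y_t=x$, and $y_{\sigma\wedge\tau}-Q_{\sigma\tau}\in\mathcal{K}_{\sigma\wedge\tau}$ for every $\tau\in\mathcal{T}$ with $\tau\ge t$. Specialising to $t=0$ recovers \eqref{eq:Z0=hedging}. The base case $t=T$ is immediate: both $\sigma$ and $\tau$ must equal $T$, so the hedging condition reduces to $y_T-X'_T\in\mathcal{K}_T$, i.e.~$y_T\in\mathcal{X}^a_T=\mathcal{Z}^a_T$.

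For the inductive step, I would argue both inclusions node-by-node over $\Omega_t$. To show that hedge-ability forces $x\in\mathcal{Z}^a_t$, given a hedging pair $(\sigma,y)$ with $y_t=x$ I split $\Omega_t$ according to $\{\sigma=t\}$ versus $\{\sigma>t\}$. On $\{\sigma=t\}$, testing $\tau=t$ and $\tau>t$ separately while using \eqref{eq:ass:penalties} in the form $X_t+\mathcal{K}_t\subseteq X'_t+\mathcal{K}_t$ (for $t<T$) yields $x\in\mathcal{X}^a_t$; at $t=T$ only $\tau=T$ matters, giving $x\in X'_T+\mathcal{K}_T=\mathcal{X}^a_T$. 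On $\{\sigma>t\}$, testing $\tau=t$ forces $x\in Y_t+\mathcal{K}_t=\mathcal{Y}^a_t$, while the restricted pair $(\sigma,(y_s)_{s\ge t+1})$ hedges from $t+1$ starting at $y_{t+1}$, so the inductive hypothesis places $y_{t+1}\in\mathcal{Z}^a_{t+1}$; predictability of $y$ then gives $y_{t+1}\in\mathcal{Z}^a_{t+1}\cap\mathcal{L}_t=\mathcal{W}^a_t$, and the self-financing condition $y_t-y_{t+1}\in\mathcal{K}_t$ yields $x\in\mathcal{W}^a_t+\mathcal{K}_t=\mathcal{V}^a_t$. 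Either way $x\in\mathcal{Z}^a_t$.

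Conversely, to produce a hedging pair from $x\in\mathcal{Z}^a_t$, I partition $\Omega_t$ into the nodes where $x\in\mathcal{X}^a_t$ and the complementary set where $x\in\mathcal{V}^a_t\cap\mathcal{Y}^a_t$. On the first event I take $\sigma=t$ and extend by $y_s=y_t$ for $s>t$ (trivially self-financing); \eqref{eq:ass:penalties} then covers both $\tau=t$ (needing $x-X'_t\in\mathcal{K}_t$) and $\tau>t$ (needing $x-X_t\in\mathcal{K}_t$), noting the boundary case $t=T$ where only $X'_T$ enters. On the second event I perform a finitary, hence elementary, $\mathcal{F}_t$-measurable selection $x=w+k$ with $w\in\mathcal{W}^a_t\subseteq\mathcal{Z}^a_{t+1}$ and $k\in\mathcal{K}_t$, apply the inductive hypothesis to obtain a pair hedging from $t+1$ starting at $w$, and set $y_{t+1}=w$; the concatenation hedges from $t$ because $\tau=t$ is handled by $x\in\mathcal{Y}^a_t$ and $\tau>t$ by the inductive strategy. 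The main obstacle is the careful gluing of these node-wise constructions into a single $\mathcal{F}_t$-measurable stopping time and self-financing trading strategy, together with the separate treatment of the terminal boundary $t=T$ where $\mathcal{X}^a_T$ involves $X'_T$ rather than $X_T$.
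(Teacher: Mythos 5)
Your proposal is correct and rests on the same core idea as the paper's proof: the backward recursion defining $\mathcal{Z}^a_t$ exactly characterizes the portfolios from which one can hedge over the remaining horizon $[t,T]$, with $\mathcal{X}^a_t$ corresponding to immediate cancellation and $\mathcal{V}^a_t\cap\mathcal{Y}^a_t$ to postponing cancellation while remaining protected against exercise at $t$. Your direction showing hedgeability $\Rightarrow x\in\mathcal{Z}^a_t$ is essentially identical to the paper's backward induction, including the use of predictability to land in $\mathcal{W}^a_t=\mathcal{Z}^a_{t+1}\cap\mathcal{L}_t$.

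The other direction is organized differently but is mathematically equivalent. You glue local hedges backward from $T$: at each node either stop ($x\in\mathcal{X}^a_t$) or rebalance into $\mathcal{W}^a_t$ and concatenate with a strategy known to hedge from $t+1$. The paper instead runs a forward-in-time construction starting at $z\in\mathcal{Z}^a_0$, building $y_1,y_2,\ldots$ and a non-decreasing sequence $\sigma_0\le\sigma_1\le\cdots$ of stopping times, declaring cancellation the first time $y_t\in\mathcal{X}^a_t$. The two are equivalent up to unrolling, but the paper's forward version has the practical advantage of directly producing Construction~\ref{constr:writer:hedging} for the optimal hedge. One small imprecision in your write-up: as stated, the time-$t$ version requires a pair $(\sigma,y)\in\mathcal{T}\times\Phi$ with $y_t=x$, but $\Phi$ consists of strategies on $[0,T]$ with $y_0\in\mathcal{L}_0$ (deterministic), so not every $x\in\mathcal{L}_t$ can be reached; the intended meaning is a self-financing strategy on $[t,T]$ only, or equivalently one should regard $y_0,\ldots,y_{t-1}$ as irrelevant and unconstrained. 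This is a formalization issue only and does not affect the argument, and you have correctly flagged the genuine points of care (the finite measurable selection, the node-wise gluing, and the boundary $t=T$ where $\mathcal{X}^a_T$ uses $X'_T$).
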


It is demonstrated in the proof of Proposition~\ref{prop:hedge-equivalents}, which is deferred to Section~\ref{sec:proofs}, that for each $t<T$ the sets~$\mathcal{V}^a_t$, $\mathcal{W}^a_t$ and $\mathcal{Z}^a_t$ have natural interpretations that are important to the seller of the option. The set $\mathcal{W}^a_t$ consists of those portfolios at time $t$ that allow the seller to hedge the option in the future (at time~$t+1$ or later), and $\mathcal{V}^a_t$ consists of those portfolios that may be rebalanced at time $t$ into a portfolio in~$\mathcal{W}^a_t$. The set $\mathcal{Z}^a_t$ consists of all portfolios that allow the seller to settle the option at time $t$ or any time in the future without risk of loss.

Construction \ref{constr:writer} is essentially an iteration (backwards in time) over the nodes of the price tree generated by the exchange rates; note in particular that~\eqref{eq:constr:C} could equivalently be written as
\begin{equation} \label{eq:constr:C'}
 \mathcal{W}^{a\mu}_t := \bigcap_{\nu\in\successors\mu}\mathcal{Z}^{a\nu}_{t+1} \text{ for }\mu\in\Omega_t.\tag{$\text{\ref{eq:constr:C}}'$}
\end{equation}
This property makes the construction particularly efficient for recombinant models, for which the number of nodes grow only polynomially with the number of steps in the model, despite the state space growing exponentially.

The sets $\mathcal{X}^a_t$ and $\mathcal{Y}^a_t$ in Construction \ref{constr:writer} are clearly polyhedral and non-empty for all $t=0,\ldots,T$, as are $\mathcal{V}^a_T$, $\mathcal{W}^a_T$ and $\mathcal{Z}^a_T$. The operations in Construction \ref{constr:writer} are direct addition of polyhedral cones in \eqref{eq:constr:A} and \eqref{eq:constr:B}, intersection in \eqref{eq:constr:C} and \eqref{eq:constr:D}, and union in \eqref{eq:constr:D}. The appearance of the union in \eqref{eq:constr:D} means that the sets $\mathcal{V}^a_t$, $\mathcal{W}^a_t$ and $\mathcal{Z}^a_t$ may be non-convex for some $t<T$. However, it is clear that these sets can be written as the finite union of non-empty (closed) polyhedra, and are therefore closed. In particular the closedness of $\mathcal{Z}^a_0$ is essential to Theorem~\ref{th:Z0=hedging} below.

The \emph{ask price} of the game option in terms of any currency is defined as the infimal initial endowment in that currency that would allow the seller to hedge the game option without risk. 

\begin{definition}[Ask price]\index{pai@$\pi^a_i(Y,X,X')$} \label{def:ask}
The \emph{ask price} or \emph{seller's price} or \emph{upper hedging price} of a game option $(Y,X,X')$ at time $0$ in terms of currency~$i=1,\ldots,d$ is
\begin{multline*}
 \pi^a_i(Y,X,X'):= \inf\{z\in\mathbb{R}:(\sigma,y)\in\mathcal{T}\times\Phi\text{ with }y_0=ze^i \\ \text{ hedges } (Y,X,X')\text{ for the seller}\}.
\end{multline*}
\end{definition}

The existence of the buy-and-hold strategy for the seller means that the ask price is well defined. We now present a dual representation for the ask price in terms of randomised stopping times and approximate martingale pairs. 

\begin{theorem}\label{th:ask-price}
 The ask price of a game option $(Y,X,X')$ in terms of currency $i=1,\ldots,d$ is
\begin{align*}
 \pi^a_i(Y,X,X')&= \min_{\sigma\in\mathcal{T}}\max_{\chi\in\mathcal{X}}\sup_{(\mathbb{P},S)\in\mathcal{P}^i(\chi\wedge\sigma)}\mathbb{E}_\mathbb{P}((Q_{\sigma\cdot}\cdot S_{\sigma\wedge\cdot})_\chi)\\
 &=\min_{\sigma\in\mathcal{T}}\max_{\chi\in\mathcal{X}}\max_{(\mathbb{P},S)\in\bar{\mathcal{P}}^i(\chi\wedge\sigma)}\mathbb{E}_\mathbb{P}((Q_{\sigma\cdot}\cdot S_{\sigma\wedge\cdot})_\chi),
\end{align*}
where $Q_{\sigma\cdot}\cdot S_{\sigma\wedge\cdot}$ denotes the process $(Q_{\sigma t}\cdot S_{\sigma\wedge t})_{t=0}^T$, in other words,
\[
 (Q_{\sigma\cdot}\cdot S_{\sigma\wedge\cdot})_\chi = \sum_{t=0}^{\sigma-1}\chi_tY_t\cdot S_t + \chi^\ast_{\sigma+1}X_\sigma\cdot S_\sigma + \chi_\sigma X'_\sigma\cdot S_\sigma.
\]
\end{theorem}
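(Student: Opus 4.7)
The plan is to reduce, for each fixed cancellation time $\sigma\in\mathcal{T}$, the seller's hedging problem to one for an American option with random expiration $\sigma$ and suitably adjusted payoff, and to invoke the dual representation for such American options from \citet{Roux_Zastawniak2015} in the same multi-currency framework. Combining these duals and minimising over $\sigma$ will give the claimed formula.

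By Proposition~\ref{prop:hedge-equivalents} and the closedness of $\mathcal{Z}^a_0$, $\pi^a_i(Y,X,X')=\min\{z\in\mathbb{R}:ze^i\in\mathcal{Z}^a_0\}$. A backward induction through Construction~\ref{constr:writer} that records, at each node, the cancel/continue alternative encoded by the union in~\eqref{eq:constr:D} yields
\[
  \mathcal{Z}^a_0 = \bigcup_{\sigma\in\mathcal{T}}\mathcal{Z}^a_0(\sigma),
\]
where $\mathcal{Z}^a_0(\sigma)$ is the set of initial endowments hedging the American-style claim with expiration $\sigma$ and payoff $\tilde Y^\sigma$ given by $\tilde Y^\sigma_t:=Y_t$ for $t<\sigma$, by $\tilde Y^\sigma_\sigma:=X_\sigma$ on $\{\sigma<T\}$, and by $\tilde Y^\sigma_T:=X'_T$ on $\{\sigma=T\}$. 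The split at $t=\sigma$ reflects that for $\sigma<T$ the seller must still be protected against exercise strictly after $\sigma$ (delivering $X_\sigma$), whereas at $\sigma=T$ cancellation coincides with exercise ($X'_T$). Applying the American-option dual of \citet{Roux_Zastawniak2015} to each $\sigma$ gives
\[
  \min\{z:ze^i\in\mathcal{Z}^a_0(\sigma)\} = \max_{\tilde\chi\in\mathcal{X}\wedge\sigma}\sup_{(\mathbb{P},S)\in\mathcal{P}^i(\tilde\chi)}\mathbb{E}_\mathbb{P}\left(\sum_{t=0}^T\tilde\chi_t\,\tilde Y^\sigma_t\cdot S_t\right),
\]
together with the corresponding max-version over $\bar{\mathcal{P}}^i(\tilde\chi)$.

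To equate this with the form stated in the theorem, I use that $X_\sigma-X'_\sigma\in\mathcal{K}_\sigma$ and $S_\sigma\in\mathcal{K}^\ast_\sigma$, so $(X_\sigma-X'_\sigma)\cdot S_\sigma\ge 0$. For any $\chi\in\mathcal{X}$ and $(\mathbb{P},S)\in\mathcal{P}^i(\chi\wedge\sigma)$, letting $\chi'$ transfer the mass $\chi_\sigma$ from time $\sigma$ to time $\sigma+1$ on $\{\sigma<T\}$ (and setting $\chi'=\chi$ on $\{\sigma=T\}$) produces $\chi'\in\mathcal{X}$ with $\chi'\wedge\sigma=\chi\wedge\sigma$ and $\mathbb{E}_\mathbb{P}((Q_{\sigma\cdot}\cdot S_{\sigma\wedge\cdot})_{\chi'})\ge\mathbb{E}_\mathbb{P}((Q_{\sigma\cdot}\cdot S_{\sigma\wedge\cdot})_\chi)$; a direct pointwise computation moreover shows $(Q_{\sigma\cdot}\cdot S_{\sigma\wedge\cdot})_{\chi'}=\sum_t(\chi'\wedge\sigma)_t\,\tilde Y^\sigma_t\cdot S_t$. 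Conversely, every $\tilde\chi\in\mathcal{X}\wedge\sigma$ arises as $\chi'\wedge\sigma$ from some such $\chi'$, so the two suprema coincide, and taking $\min$ over $\sigma\in\mathcal{T}$ finishes the proof.

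The principal obstacle is the first step: translating the non-convex recursive union in~\eqref{eq:constr:D} into the clean $\sigma$-indexed decomposition $\mathcal{Z}^a_0=\bigcup_\sigma\mathcal{Z}^a_0(\sigma)$ demands careful inductive bookkeeping over the tree nodes and verification that the resulting random time is a stopping time. A secondary subtlety is confirming that the shifted process $\chi'$ in the last step is adapted despite its piecewise definition across $\{\sigma<T\}$ and $\{\sigma=T\}$.
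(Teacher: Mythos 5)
Your architecture is the right one and closely tracks the paper: for each fixed $\sigma$ you reduce to a short American option with random expiration $\sigma$ and invoke the dual from \citet{Roux_Zastawniak2015}; the payoff $\tilde Y^\sigma$ you construct coincides (for $t\le\sigma$) with the paper's auxiliary process $H_\sigma$ in \eqref{eq:Yprime}; and your closing step, shifting the mass $\chi_\sigma$ off $\{t=\sigma<T\}$ and using $(X_\sigma-X'_\sigma)\cdot S_\sigma\ge 0$, is essentially the computation \eqref{eq:gamma-eta}--\eqref{eq:gamma-eta:2}.

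You have, however, misplaced where the real work lies. You single out as the ``principal obstacle'' the extraction of the decomposition $\mathcal{Z}^a_0=\bigcup_\sigma\mathcal{Z}^a_0(\sigma)$ from the union in \eqref{eq:constr:D}, but that decomposition requires no tree induction: Definition~\ref{def:ask} together with the finiteness of $\mathcal{T}$ gives $\pi^a_i(Y,X,X')=\min_\sigma p^a_i(\sigma)$ directly, and if you insist on passing through $\mathcal{Z}^a_0$ then Proposition~\ref{prop:hedge-equivalents} already writes $\mathcal{Z}^a_0$ as the union over $\sigma$ of the $(\sigma,\cdot)$-hedging endowments. The genuine gap is instead the unproved claim that the $(\sigma,\cdot)$-hedging set of the game option equals the hedging set of the American option with payoff $\tilde Y^\sigma$ and expiration $\sigma$. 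This is the paper's Lemma~\ref{lem:hedge-equivalence}, and it is not automatic: the seller's condition $y_{\sigma\wedge\tau}-Q_{\sigma\tau}\in\mathcal{K}_{\sigma\wedge\tau}$ for all $\tau$ tests delivery of $Y_\tau$, $X'_\sigma$ and $X_\sigma$ on the three events $\{\tau<\sigma\}$, $\{\tau=\sigma\}$ and $\{\tau>\sigma\}$ respectively, whereas the $\tilde Y^\sigma$-condition only ranges over $\tau\le\sigma$ and, on $\{\tau=\sigma<T\}$, asks only for $X_\sigma$. The two are equivalent only because of the ordering \eqref{eq:ass:penalties}: since $X_\sigma-X'_\sigma\in\mathcal{K}_\sigma$, a portfolio covering $X_\sigma$ also covers $X'_\sigma$, so the separate $\{\tau=\sigma\}$ and $\{\tau>\sigma\}$ tests collapse into the single requirement of delivering $X_\sigma$ at time $\sigma$ on $\{\sigma<T\}$. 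You need to state and prove this equivalence (including the converse direction, where $\tau=T$ is used to extract the $\{\sigma<T\}$ condition); once that lemma is in place, the remainder of your argument is sound.
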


The proof of Theorem~\ref{th:ask-price} appears in Section~\ref{sec:proofs}. 

\begin{remark} \label{rem:expiration-date}
  \citet[Theorem~3.1]{Kifer2013a} obtained a similar dual representation for a game option $(Y,X,Y)$ in a two-currency model, that does not feature the truncated stopping time $\chi\wedge\sigma$ and stopped process $S_{\sigma\wedge\cdot}$, but rather $\chi$ and $S$. Example~\ref{ex:counter} demonstrates that these dual representations are not equivalent in general and that the representation in Theorem~\ref{th:ask-price} is indeed the correct one.
 
 The reason for the difference between the two representations can be explained intuitively in the following way. The proof of Theorem~\ref{th:ask-price} hinges on the fact that a pair $(\sigma,y)$ hedges the game option for the seller if and only if $y$ hedges an American option with payoff process $H_\sigma$ (defined in \eqref{eq:Yprime}) and random expiration date $\sigma$ for the seller. By contrast, the proof of Theorem~3.1 of \cite{Kifer2013a} claims that $(\sigma,y)$ hedges the game option $(Y,X,Y)$ for the seller if and only if $y$ hedges an American option with payoff process $\left(Q_{\sigma t}^{Y,X,Y}\right)_{t=0}^T$ and expiration date $T$ (rather than $\sigma$) for the seller. This claim does not hold true in general, because hedging such an American option would require the seller to be in a position to deliver $Q_{\sigma t}=X_\sigma$ on $\{\sigma>t\}$ at any time $t$, in other words, after the option has already been cancelled. As evidenced in Example~\ref{ex:counter}, the non-equivalence is most easily noticed when transaction costs are large at time $\sigma$ and/or $X_\sigma$ is a non-solvent portfolio.
\end{remark}

Returning to the problem of computing the ask price of a game option, the following result is a direct consequence of Proposition~\ref{prop:hedge-equivalents} and the closedness of $\mathcal{Z}^a_0$.

\begin{theorem} \label{th:Z0=hedging}
 We have
\begin{equation}  \label{eq:pi-Z0}
 \pi^a_i(Y,X,X') = \min\{x\in\mathbb{R}:xe^i\in\mathcal{Z}^a_0\}.
\end{equation}
Moreover, there exists a hedging strategy $(\hat{\sigma},\hat{y})$ for the seller such that $\hat{y}_0=\pi^a_i(Y,X,X')e^i$.
\end{theorem}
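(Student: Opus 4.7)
The plan is to derive Theorem~\ref{th:Z0=hedging} directly from Proposition~\ref{prop:hedge-equivalents} together with elementary topological properties of $\mathcal{Z}^a_0$. By the characterisation~\eqref{eq:Z0=hedging}, the existence condition on $(\sigma,y)$ in Definition~\ref{def:ask} is equivalent to $ze^i \in \mathcal{Z}^a_0$, so
\[
\pi^a_i(Y,X,X') = \inf\{z\in\mathbb{R}:ze^i\in\mathcal{Z}^a_0\}.
\]
Both conclusions of the theorem then reduce to showing that this infimum is attained; equivalently, that the set $D:=\{z\in\mathbb{R}:ze^i\in\mathcal{Z}^a_0\}$ is a non-empty, closed, lower-bounded subset of $\mathbb{R}$.

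Non-emptiness is immediate from the buy-and-hold strategy $y_t=me^i$ displayed just after the definition of seller's hedging strategy, combined with~\eqref{eq:Z0=hedging}. Closedness was already noted in the discussion following Construction~\ref{constr:writer}: $\mathcal{Z}^a_0$ is a finite union of non-empty closed polyhedra and hence closed, so $D$ is closed as the preimage of a closed set under the continuous map $z\mapsto ze^i$.

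For the lower bound I would fix an equivalent martingale pair $(\mathbb{P},S)\in\mathcal{P}$, which exists by Theorem~\ref{th:ftap}, and rescale so that $S_0^i=1$ (possible since every non-zero element of $\mathcal{K}^\ast_0$ has strictly positive coordinates, as is easily verified from the inequalities $\pi^{ij}_0 S_0^i\ge S_0^j$ cutting out $\mathcal{K}^\ast_0$). If $z\in D$, choose via~\eqref{eq:Z0=hedging} a hedging strategy $(\sigma,y)$ with $y_0=ze^i$ and evaluate~\eqref{eq:def:seller-hedge} at the deterministic stopping time $\tau=0$: this gives $y_0-Q_{\sigma 0}\in\mathcal{K}_0$, and taking the inner product with $S_0\in\mathcal{K}^\ast_0$ yields $z=y_0\cdot S_0\ge Q_{\sigma 0}\cdot S_0$. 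Since $Q_{\sigma 0}$ takes only the two $\mathcal{F}_0$-measurable values $X'_0$ (on $\{\sigma=0\}$) and $Y_0$ (on $\{\sigma\ge 1\}$), the right-hand side is bounded below by $\min\{X'_0\cdot S_0,Y_0\cdot S_0\}$ uniformly in $(\sigma,y)$.

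With these three properties in place, the infimum defining $D$ is attained at some $\hat{x}\in D$, which establishes~\eqref{eq:pi-Z0}; a final appeal to Proposition~\ref{prop:hedge-equivalents} produces a hedging strategy $(\hat{\sigma},\hat{y})$ with $\hat{y}_0=\hat{x}e^i=\pi^a_i(Y,X,X')e^i$. The only substantive step is the lower bound, where no-arbitrage must rule out hedging at arbitrarily negative cost in the $e^i$ direction; this is dispatched by the elementary consistent-pricing inequality above, so the proof is essentially a wrapper around Proposition~\ref{prop:hedge-equivalents} and the already-noted closedness of $\mathcal{Z}^a_0$.
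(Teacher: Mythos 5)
Your proof is correct and follows the paper's own route: the equality $\pi^a_i(Y,X,X')=\inf\{z:ze^i\in\mathcal{Z}^a_0\}$ comes straight from Proposition~\ref{prop:hedge-equivalents}, and attainment of the infimum comes from the closedness of $\mathcal{Z}^a_0$ noted after Construction~\ref{constr:writer}. The paper leaves the lower-boundedness of $\{z:ze^i\in\mathcal{Z}^a_0\}$ implicit; your explicit argument --- pairing a hedging endowment $ze^i$ against an equivalent martingale pair normalised to $S^i_0=1$ and evaluating at $\tau=0$ --- correctly fills that small gap.
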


A hedging strategy $(\sigma,y)$ for the seller is called \emph{optimal} if it satisfies the properties in Theorem~\ref{th:Z0=hedging}. A procedure for constructing such a strategy can be extracted from the proof of Proposition~\ref{prop:hedge-equivalents}.

\begin{construction}\label{constr:writer:hedging}
 Construct an optimal strategy $(\hat{\sigma},\hat{y})$ for the seller as follows. Let
 \[
  \hat{y}_0:=\pi^a_i(Y,X,X')e^i.
 \]
 For each $t=0,\ldots,T-1$ and $\mu\in\Omega_t$, if $\hat{y}^\mu_t\in\mathcal{Z}^{a\mu}_t\setminus\mathcal{X}^{a\mu}_t$, then choose any
 \begin{equation} \label{eq:constr:seller:hedging}
  \hat{y}^\mu_{t+1} \in\mathcal{W}^{a\mu}_t\cap\left[\hat{y}_t^\mu-\mathcal{K}^\mu_t\right],
 \end{equation}
 otherwise put $\hat{y}^\mu_{t+1}:=\hat{y}^\mu_t$. Also define
 \[
  \hat{\sigma}:=\min\left\{t:\hat{y}_t\in\mathcal{X}^a_t\right\}.
 \]
\end{construction}

The optimal strategy for the seller is not unique in general; this is reflected in the choice \eqref{eq:constr:seller:hedging}. In practice the seller might use secondary considerations, such as a preference for holding certain currencies over others, or optimality of a secondary hedging criterion, to guide the construction of a suitable optimal hedging strategy.

Two toy examples illustrating Constructions \ref{constr:writer} and \ref{constr:writer:hedging} and Theorem~\ref{th:ask-price}, as well as a third example with a more realistic flavour can be found in Section~\ref{sec:numerical}.

\subsection{Pricing and hedging for the buyer}
\label{sec:buyer}

Consider now the hedging, pricing and optimal exercise problem for the buyer of a game option $(Y,X,X')$.

\begin{definition}[Hedging strategy for the buyer]
 A \emph{hedging strategy for the buyer} is a pair $(\tau,y)\in\mathcal{T}\times\Phi$ satisfying
\begin{equation} \label{eq:buyer-hedge:1}
 y_{\sigma\wedge\tau} + Q_{\sigma\tau}\in\mathcal{K}_{\sigma\wedge\tau} \text{ for all }\sigma\in\mathcal{T},
\end{equation}
where the payoff process $Q$ is defined in \eqref{eq:def:eta}.
\end{definition}

Observe from \eqref{eq:def:eta} that
\begin{equation} \label{eq:buyer-seller-payoff-symmetry}
 Q_{\sigma\tau} = Q^{Y,X,X'}_{\sigma\tau} = -Q^{-X,-Y,-X'}_{\tau\sigma} \text{for all } \sigma,\tau\in\mathcal{T},
\end{equation}
and moreover from \eqref{eq:ass:penalties} that for all $t=0,\ldots,T$
\begin{align*}
 -Y_t - (-X'_t) &= X'_t - Y_t \in\mathcal{K}_t,\\
 -X'_t - (-X_t) &= X_t - X'_t \in\mathcal{K}_t.
\end{align*}
Thus if $(Y,X,X')$ is the payoff of a game option, then so is $(-X,-Y,-X')$, and \eqref{eq:buyer-hedge:1} is equivalent to
\[
 y_{\tau\wedge\sigma} -Q^{-X,-Y,-X'}_{\tau\sigma} \in \mathcal{K}_{\tau\wedge\sigma} \text{ for all }\sigma\in\mathcal{T}.
\]
Thus we arrive at the following result.

\begin{proposition} \label{prop:buyer-seller-symmetry}
A pair $(\tau,y)\in\mathcal{T}\times\Phi$ hedges the game option $(Y,X,X')$ for the buyer if and only if $(\tau,y)$ hedges the game option $(-X,-Y,-X')$ for the seller. 
\end{proposition}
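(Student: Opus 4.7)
The proof plan is essentially to formalise the symmetry observations already laid out in the paragraph immediately preceding the statement. The argument is a chain of equivalences, so both directions of the ``if and only if'' can be handled simultaneously.

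First, I would check that the triple $(-X,-Y,-X')$ really is a valid game option payoff in the sense of Definition~\ref{def:game option}. The only non-trivial condition is the ordering \eqref{eq:ass:penalties} with $-X$ in the role of $X$, $-Y$ in the role of $Y$ and $-X'$ in the role of $X'$. That is precisely the content of the two displayed equations just before the statement, which are immediate algebraic rearrangements of the original assumption that $X_t-X'_t\in\mathcal{K}_t$ and $X'_t-Y_t\in\mathcal{K}_t$.

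Second, I would unpack the buyer's hedging condition \eqref{eq:buyer-hedge:1}: $(\tau,y)$ hedges $(Y,X,X')$ for the buyer means
\[
 y_{\sigma\wedge\tau} + Q^{Y,X,X'}_{\sigma\tau} \in \mathcal{K}_{\sigma\wedge\tau} \text{ for all }\sigma\in\mathcal{T}.
\]
Apply the payoff symmetry \eqref{eq:buyer-seller-payoff-symmetry}, namely $Q^{Y,X,X'}_{\sigma\tau} = -Q^{-X,-Y,-X'}_{\tau\sigma}$, to rewrite this equivalently as
\[
 y_{\sigma\wedge\tau} - Q^{-X,-Y,-X'}_{\tau\sigma} \in \mathcal{K}_{\sigma\wedge\tau} \text{ for all }\sigma\in\mathcal{T}.
\]
Since the meet is commutative, $\sigma\wedge\tau=\tau\wedge\sigma$, and the quantifier runs over the same set $\mathcal{T}$ regardless of the name of the dummy variable, relabelling $\sigma$ produces precisely the seller's hedging condition \eqref{eq:def:seller-hedge} applied to the game option $(-X,-Y,-X')$ with $\tau$ now in the role of the cancellation time.

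Since every step is a logical equivalence, both implications are obtained at once. There is no real obstacle: the work has already been done in establishing \eqref{eq:buyer-seller-payoff-symmetry} and the ordering check above, and the proposition is a clean reformulation of these identities. The only point to emphasise for the reader is that the membership $(\tau,y)\in\mathcal{T}\times\Phi$ is unchanged under the symmetry, so admissibility of the pair does not need to be reverified.
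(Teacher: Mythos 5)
Your proof is correct and follows exactly the same route as the paper: verify that $(-X,-Y,-X')$ satisfies \eqref{eq:ass:penalties}, apply the payoff identity \eqref{eq:buyer-seller-payoff-symmetry}, and observe that the buyer's condition \eqref{eq:buyer-hedge:1} for $(Y,X,X')$ rewrites verbatim as the seller's condition \eqref{eq:def:seller-hedge} for $(-X,-Y,-X')$ after commuting the meet and relabelling the dummy stopping time. The paper treats the proposition as an immediate consequence of the displayed computations preceding it, which is precisely what you have done.
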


This symmetry means that the results and constructions developed in the previous subsection for the seller's case can also be applied to the hedging and pricing problem for the buyer, thus substantiating the claim by \citet[pp.~679--80]{Kifer2013a}. In particular, Construction \ref{constr:writer} can be applied directly, provided that $\mathcal{Y}^a_t$ and $\mathcal{X}^a_t$ in \eqref{eq:constr:A} is redefined to take into account the fact that the option is now $(-X,-Y,-X')$ rather than $(Y,X,X')$. The resulting construction reads as follows.

\begin{construction} \label{const:buyer}
 For all $t$ let
\begin{align*}
 \mathcal{Y}^b_t &:= -X_t + \mathcal{K}_t,&
 \mathcal{X}^b_t &:= \begin{cases}
                    -X'_T + \mathcal{K}_T & \text{if } t=T,\\
                    -Y_t + \mathcal{K}_t & \text{if }t<T.
                   \end{cases}
\end{align*}
Define 
\begin{align*}
\mathcal{W}^b_T := \mathcal{V}^b_T &:= \mathcal{L}_T,&
\mathcal{Z}^b_T&:=\mathcal{X}^b_T.
\end{align*}
For $t=T-1,\ldots,0$ let
\begin{align*}
 \mathcal{W}^b_t &:= \mathcal{Z}^b_{t+1} \cap \mathcal{L}_t,&
 \mathcal{V}^b_t &:= \mathcal{W}^b_t + \mathcal{K}^b_t,&
 \mathcal{Z}^b_t &:= (\mathcal{V}^b_t\cap\mathcal{Y}^b_t)\cup\mathcal{X}^b_t.
\end{align*}
\end{construction}

It follows directly from Theorem~\ref{prop:hedge-equivalents} that $\mathcal{Z}^b_0$ is the set of initial endowments that allow the buyer to hedge the option, i.e.
\[
 \mathcal{Z}^b_0=\{z:(\tau,y)\in\mathcal{T}\times\Phi\text{ with }y_0=ze^i\text{ hedges }(Y,X,X')\text{ for the buyer}\}.
\]

The bid price of a game option in any currency is the largest amount that the buyer can raise in that currency at time $0$ by using the payoff of the option as a guarantee.

\begin{definition}[Bid price]\index{pai@$\pi^b_i(Y,X,X')$} \label{def:bid}
The \emph{bid price} or \emph{lower hedging price} or \emph{buyer's price} of a game option $(Y,X,X')$ in currency $i=1,\ldots,d$ is defined as
\begin{multline*}
 \pi^b_i(Y,X,X') 
 := \sup\{-z:(\tau,y)\in\mathcal{T}\times\Phi\text{ with }y_0=ze^i \\ \text{ superhedges }(Y,X,X')\text{ for the buyer}\}.
\end{multline*}
\end{definition}

Proposition~\ref{prop:buyer-seller-symmetry} and Construction \ref{const:buyer} give that
\begin{align}
 \pi^b_i(Y,X,X') 
 &=-\pi^a_i(-X,-Y,-X')=-\inf\left\{z:ze^i\in\mathcal{Z}^b_0\right\}. \label{eq:bid-min-ask}
 \end{align} 
 A hedging strategy $(\tau,y)$ for the buyer is called \emph{optimal}\index{optimal hedging strategy} if $y_0=-\pi^b_i(Y,X,X')e^i$. Optimal hedging strategies can be generated by rewriting Construction~\ref{constr:writer:hedging} as follows.

 \begin{construction} \label{constr:buyer:hedging}
 Construct an optimal strategy $(\check{\tau},\check{y})$ for the buyer as follows. Let
 \[
  \check{y}_0:=-\pi^b_i(Y,X,X')e^i.
 \]
 For each $t=0,\ldots,T-1$ and $\mu\in\Omega_t$, if $\check{y}^\mu_t\in\mathcal{Z}^{b\mu}_t\setminus\mathcal{X}^{b\mu}_t$, then choose any
 \begin{equation} \label{eq:constr:buyer:hedging}
  \check{y}^\mu_{t+1} \in\mathcal{W}^{b\mu}_t\cap\left[\check{y}_t^\mu-\mathcal{K}^\mu_t\right],
 \end{equation}
 otherwise put $\check{y}^\mu_{t+1}:=\check{y}^\mu_t$. Also define
 \[
  \check{\tau}:=\min\left\{t:\check{y}_t\in\mathcal{X}^b_t\right\}.
 \]
\end{construction}
 
A toy example illustrating Constructions \ref{const:buyer} and \ref{constr:buyer:hedging} can be found in Section~\ref{sec:numerical}. It demonstrates that the optimal cancellation time $\hat{\sigma}$ for the seller and the optimal exercise time $\check{\tau}$ for the buyer are not the same in general, and these times may also be different from the stopping time $\hat{\sigma}\wedge\check{\tau}$ at which the option payoff is paid.

 Finally, combining Theorem~\ref{th:ask-price} with \eqref{eq:buyer-seller-payoff-symmetry} and \eqref{eq:bid-min-ask} immediately gives the following dual representation for the bid price.
 
 \begin{theorem} \label{th:bid-price}
 We have
  \begin{align*}
   \pi^b_i(Y,X,X') &= \max_{\tau\in\mathcal{T}}\min_{\chi\in\mathcal{X}}\inf_{(\mathbb{P},S)\in\mathcal{P}^i(\chi\wedge\tau)}\mathbb{E}_\mathbb{P}((Q_{\cdot\tau}\cdot S_{\cdot\wedge\tau})_\chi) \\&= \max_{\tau\in\mathcal{T}}\min_{\chi\in\mathcal{X}}\min_{(\mathbb{P},S)\in\bar{\mathcal{P}}^i(\chi\wedge\tau)}\mathbb{E}_\mathbb{P}((Q_{\cdot\tau}\cdot S_{\cdot\wedge\tau})_\chi),
  \end{align*}
  where $Q_{\cdot\tau}\cdot S_{\cdot\wedge\tau}$ denotes the process $(Q^{Y,X,X'}_{s \tau}\cdot S_{s\wedge\tau})_{s=0}^T$, i.e.
  \[
   (Q_{\cdot\tau}\cdot S_{\cdot\wedge\tau})_\chi = \sum_{s=0}^{\tau-1}\chi_sX_s\cdot S_s + \chi^\ast_{\tau+1}Y_\tau\cdot S_\tau + \chi_\tau X'_\tau\cdot S_\tau.
  \]
  \end{theorem}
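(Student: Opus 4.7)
The plan is to derive Theorem~\ref{th:bid-price} by direct symmetry reduction to Theorem~\ref{th:ask-price}, using the two ingredients already highlighted by the author: the payoff symmetry \eqref{eq:buyer-seller-payoff-symmetry} and the bid/ask identity \eqref{eq:bid-min-ask}. Since this is presented as ``immediate'', no fresh dynamic-programming argument is needed; the work is really just a careful unwinding of indices and signs.

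First I would write, using \eqref{eq:bid-min-ask},
\[
\pi^b_i(Y,X,X') = -\pi^a_i(-X,-Y,-X'),
\]
which is justified by Proposition~\ref{prop:buyer-seller-symmetry} (the buyer of $(Y,X,X')$ is the seller of $(-X,-Y,-X')$, with the same self-financing strategy space $\Phi$ and the same currency unit $e^i$, so the sign on $z$ just swaps the sup over $-z$ for an inf over $z$). Next, apply Theorem~\ref{th:ask-price} to the game option $(-X,-Y,-X')$, with cancellation variable renamed $\tau$ (it plays the role of the buyer's exercise time once we invert the symmetry):
\[
\pi^a_i(-X,-Y,-X') = \min_{\tau\in\mathcal{T}}\max_{\chi\in\mathcal{X}}\sup_{(\mathbb{P},S)\in\mathcal{P}^i(\chi\wedge\tau)}\mathbb{E}_\mathbb{P}\bigl((Q^{-X,-Y,-X'}_{\tau\cdot}\cdot S_{\tau\wedge\cdot})_\chi\bigr),
\]
and likewise for the $\bar{\mathcal{P}}^i$ version (with the outer $\sup$ attained as a $\max$).

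The next step is to rewrite the integrand. From \eqref{eq:buyer-seller-payoff-symmetry} we have $Q^{-X,-Y,-X'}_{\tau s} = -Q^{Y,X,X'}_{s\tau}$ for all $s,\tau$, so
\[
(Q^{-X,-Y,-X'}_{\tau\cdot}\cdot S_{\tau\wedge\cdot})_\chi
= \sum_{s=0}^{T}\chi_s\,Q^{-X,-Y,-X'}_{\tau s}\cdot S_{\tau\wedge s}
= -\sum_{s=0}^{T}\chi_s\,Q^{Y,X,X'}_{s\tau}\cdot S_{s\wedge\tau}
= -(Q_{\cdot\tau}\cdot S_{\cdot\wedge\tau})_\chi.
\]
Substituting this and the factor $-1$ from \eqref{eq:bid-min-ask}, the minus sign propagates through the expectation, flips $\sup$ to $\inf$, flips the outer $\max_{\chi}$ to $\min_{\chi}$, and flips the outermost $\min_{\tau}$ to $\max_{\tau}$, yielding exactly the two formulas stated in Theorem~\ref{th:bid-price}. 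The explicit expansion
\[
(Q_{\cdot\tau}\cdot S_{\cdot\wedge\tau})_\chi = \sum_{s=0}^{\tau-1}\chi_s X_s\cdot S_s + \chi^\ast_{\tau+1}Y_\tau\cdot S_\tau + \chi_\tau X'_\tau\cdot S_\tau
\]
is then a direct consequence of the definition of $Q^{Y,X,X'}$ and of $(\cdot)_\chi$.

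The whole argument is essentially bookkeeping, so there is no genuine analytical obstacle; the only thing one has to be careful about is that the truncation set $\mathcal{P}^i(\chi\wedge\sigma)$ (respectively $\bar{\mathcal{P}}^i(\chi\wedge\sigma)$) in Theorem~\ref{th:ask-price} depends on the cancellation time, so under the relabeling $\sigma\mapsto\tau$ the parameter set becomes $\mathcal{P}^i(\chi\wedge\tau)$ (resp.\ $\bar{\mathcal{P}}^i(\chi\wedge\tau)$), as required. One should also note that $\mathcal{P}^i$ and $\bar{\mathcal{P}}^i$ are invariant under the payoff replacement $(Y,X,X')\mapsto(-X,-Y,-X')$, since these families depend only on the currency model (via $\mathcal{K}^\ast_t$) and on $\chi\wedge\tau$, not on the payoffs, so no extra justification is needed for transferring the approximate martingale pairs across the symmetry.
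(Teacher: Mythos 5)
Your proposal is correct and follows exactly the route the paper itself indicates: combining the payoff symmetry \eqref{eq:buyer-seller-payoff-symmetry}, the bid/ask identity \eqref{eq:bid-min-ask}, and Theorem~\ref{th:ask-price} applied to $(-X,-Y,-X')$, then propagating the sign to flip $\min/\max$ and $\sup/\inf$. The paper omits the bookkeeping as ``immediate''; your careful unwinding, including the observation that $\mathcal{P}^i(\chi\wedge\tau)$ and $\bar{\mathcal{P}}^i(\chi\wedge\tau)$ are payoff-independent, fills it in accurately.
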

  
  The representation in Theorem~\ref{th:bid-price} is different from the representation by \citet[Theorem~3.1]{Kifer2013a} for a game option $(Y,X,X')$ in a two-currency model, for reasons already discussed in the context of Theorem~\ref{th:ask-price}; see Remark~\ref{rem:expiration-date}.

\section{Proofs of results for the seller's case}
\label{sec:proofs}

\begin{proof}[Proof of Proposition~\ref{prop:hedge-equivalents}]
	Fix any $z\in\mathcal{Z}^a_0$. We claim that there exists a hedging strategy $(\sigma,y)$ for the seller with $y_0=z$. To this end, we construct $y=(y_t)_{t=0}^T$ together with a non-decreasing sequence $(\sigma_t)_{t=0}^T$ of stopping times. Define
	\begin{align*}
		y_0 &:= z, & \sigma_0:=
		\begin{cases}
			0 &\text{if }z\in\mathcal{X}^a_0,\\
			1 &\text{if }z\in\mathcal{Z}^a_0\setminus\mathcal{X}^a_0.
		\end{cases}
	\end{align*}
	If $\sigma_0=0$, let $y_1:=y_0$. If $\sigma_0=1$, then
	\[
		y_0\in\mathcal{Z}^a_0\setminus\mathcal{X}^a_0\subseteq\mathcal{V}^a_0\cap\mathcal{Y}^a_0\subseteq\mathcal{W}^a_0+\mathcal{K}_0,
	\]
	so that there exists some $y_1\in\mathcal{W}^a_0$ such that $y_0-y_1\in\mathcal{K}_0$ and $y_1\in\mathcal{Z}^a_1$.

	Suppose by induction that for some $t>0$ we have constructed $y_0,\ldots,y_t$ and non-decreasing $\sigma_0,\ldots,\sigma_{t-1}$ such that for $s=0,\ldots,t-1$ we have $y_{s+1}$ being $\mathcal{F}_s$-measurable, $\sigma_s\le s+1$, $y_s-y_{s+1}\in\mathcal{K}_s$, $y_{\sigma_s}\in\mathcal{X}^a_{\sigma_s}$ on $\{\sigma_s\le s\}$ and $y_u\in\mathcal{Z}^a_u\setminus\mathcal{X}^a_u$ on $\{\sigma_s>u\}$ for all $u=0,\ldots,s$. 	Define
	\[
	 \sigma_t := \sigma_{t-1}\mathbf{1}_{\{\sigma_{t-1}<t\}} + t\mathbf{1}_{\{\sigma_{t-1}=t\}\cap\{y_t\in\mathcal{X}^a_t\}} +(t+1)\mathbf{1}_{\{\sigma_{t-1}=t\}\cap\{y_t\in\mathcal{Z}^a_t\setminus\mathcal{X}^a_t\}}.
	\]
	On the set \[\{\sigma_{t}>t\} =\{\sigma_{t}=t+1\} =\{\sigma_{t-1}=t\}\cap\{y_t\in\mathcal{Z}^a_t\setminus\mathcal{X}^a_t\}\] we have
	\[y_t\in\mathcal{Z}^a_t\setminus\mathcal{X}^a_t\subseteq\mathcal{V}^a_t\cap\mathcal{Y}^a_t\subseteq\mathcal{V}^a_t=\mathcal{W}^a_t+\mathcal{K}_t,\]
	so there exists an $\mathcal{F}_t$-measurable random variable $x\in\mathcal{Z}^a_{t+1}$ such that $y_t-x\in\mathcal{K}_t$ on this set.
	Now define the $\mathcal{F}_t$-measurable random variable
	\[
	 y_{t+1} := x\mathbf{1}_{\{\sigma_t=t+1\}} + y_t\mathbf{1}_{\{\sigma_t\le t\}}.
	\]
	Since $0\in\mathcal{K}_t$ we have $y_t-y_{t+1}\in\mathcal{K}_t$. Moreover,
	\begin{align*}
	 y_{\sigma_t}
	&= y_{\sigma_t}\mathbf{1}_{\{\sigma_{t}<t\}} + y_t\mathbf{1}_{\{\sigma_{t}=t\}} + y_{t+1}\mathbf{1}_{\{\sigma_{t}=t+1\}}\\
	&= y_{\sigma_{t-1}}\mathbf{1}_{\{\sigma_t<t\}} + y_t\mathbf{1}_{\{\sigma_{t-1}=t\}\cap\{y_t\in\mathcal{X}^a_t\}}  + y_{t+1}\mathbf{1}_{\{\sigma_{t}=t+1\}}
	\end{align*}
	so $y_{\sigma_t}\in\mathcal{X}^a_{\sigma_t}$ on the set
	\[
	  \{\sigma_t\le t\}=\{\sigma_{t-1}<t\}\cup[\{\sigma_{t-1}=t\}\cap\{y_t\in\mathcal{X}^a_t\}].
	\]
	This concludes the inductive step. 
	
	Let $\sigma:=\sigma_T$; then the pair $(\sigma,y)\in\mathcal{T}\times\Phi$ satisfies 
	\begin{align*}
	 y_\sigma&\in\mathcal{X}^a_\sigma, & y_t&\in\mathcal{Z}^a_t\setminus\mathcal{X}^a_t\text{ on }\{t<\sigma\}\text{ for }t=0,\ldots,T-1.
	\end{align*}
        Fix any stopping time $\tau$. On $\{\tau<\sigma\}$ we have \[y_\tau\in\mathcal{Z}^a_\tau\setminus\mathcal{X}^a_\tau\subseteq\mathcal{V}^a_\tau\cap\mathcal{Y}_\tau\subseteq\mathcal{Y}_\tau=Y_\tau+\mathcal{K}_\tau.\]
	On the set $\{\tau=\sigma\}$ we have $y_\sigma\in\mathcal{X}^a_\sigma\subseteq X'_\sigma+\mathcal{K}_\sigma$. On the set $\{\tau>\sigma\}$ we have $\sigma<\tau\le T$ and $y_\sigma\in\mathcal{X}^a_\sigma=X_\sigma+\mathcal{K}_\sigma$. Thus $y_{\sigma\wedge\tau} - Q_{\sigma\tau}\in\mathcal{K}_{\sigma\wedge\tau}$, from which it follows that $(\sigma,y)$ hedges the game option for the seller.
	
	Conversely, suppose that $(\sigma,y)$ hedges the game option for the seller. We show by backward induction that $y_t\in\mathcal{Z}^a_t$ on $\{t\le\sigma\}$ and $y_t\in\mathcal{V}^a_t\cap\mathcal{Y}^a_t$ on $\{t<\sigma\}$ for all $t=0,\ldots,T$, from which it can be deduced that $z=y_0\in\mathcal{Z}^a_0$, which completes the proof. At time $T$ we have $\{T=\sigma\}=\{T\le\sigma\}$, so clearly $y_T\in X'_T+\mathcal{K}_T=\mathcal{X}^a_T=\mathcal{Z}^a_T$ on $\{T\le\sigma\}$.

	For any $t<T$, suppose that $y_{t+1}\in\mathcal{Z}^a_{t+1}$ on $\{t+1\le\sigma\}=\{t<\sigma\}$. This means that $y_{t+1}\in\mathcal{W}^a_t$ on $\{t<\sigma\}$ as $y_{t+1}\in\mathcal{L}_t$. Moreover $y_t-y_{t+1}\in\mathcal{K}_t$ implies that $y_t\in\mathcal{V}^a_t$ on $\{t<\sigma\}$, and therefore
	\[
	 y_t\in\mathcal{V}^a_t\cap[Y_t+\mathcal{K}_t]=\mathcal{V}^a_t\cap\mathcal{Y}^a_t\subseteq\mathcal{Z}^a_t\text{ on }\{t<\sigma\}.
	\]
	On the set $\{t=\sigma\}=\{t\le\sigma\}\setminus\{t<\sigma\}$ we have $y_t\in X_t+\mathcal{K}_t=\mathcal{X}^a_t\subseteq\mathcal{Z}^a_t$. This concludes the induction.
\end{proof}

The next result will play an important role in the proof of Theorem~\ref{th:ask-price}. Define the auxiliary process
\begin{align}
 H_{st} \equiv H^{(Y,X,X')}_{st}
 &:= Y_t\mathbf{1}_{\{s>t\}} + X_s\mathbf{1}_{\{s=t<T\}} + X'_s\mathbf{1}_{\{s=t=T\}} \label{eq:Yprime}
\end{align}
for all $s,t=0,\ldots,T$. Observe that the process $H_{\sigma}=(H_{\sigma t})_{t=0}^T$ is adapted for any $\sigma\in\mathcal{T}$, and that $H_{\sigma t} = Q_{\sigma t} = Y_t$ on $\{\sigma>t\}$ and $H_{\sigma t} = 0$ on $\{\sigma<t\}$ for all $t=0,\ldots,T$. 

The payoff $H_{\sigma t}$ can be interpreted as the payoff that a seller with pre-selected cancellation time $\sigma$ needs to be prepared to deliver at time $t$ if it is known that the buyer will not exercise at the same time as when the option is cancelled (except at time $t=T$); this is effectively the worst case scenario for such a seller because of \eqref{eq:ass:penalties}.

\begin{lemma}\label{lem:hedge-equivalence}
 A pair $(\sigma,y)\in\mathcal{T}\times\Phi$ hedges the game option for the seller if and only if
 \begin{equation}\label{eq:prop:hedge-equivalence}
  y_{\sigma\wedge \tau} - H_{\sigma \tau} \in\mathcal{K}_{\sigma\wedge \tau} \text{ for all }\tau\in\mathcal{T}.
 \end{equation}
\end{lemma}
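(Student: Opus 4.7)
My plan is to prove the equivalence by case analysis on the three disjoint events $\{\sigma>\tau\}$, $\{\sigma=\tau\}$ and $\{\sigma<\tau\}$ for each fixed $\tau\in\mathcal{T}$. Since $\sigma\wedge\tau=\tau$ on the first event and $\sigma\wedge\tau=\sigma$ on the other two, each containment reduces to a direct comparison of $Q_{\sigma\tau}$ and $H_{\sigma\tau}$ at the common time. Reading off the definitions, the two payoff processes coincide on $\{\sigma>\tau\}$ with common value $Y_\tau$ and on $\{\sigma=\tau=T\}$ with common value $X'_T$, so the containments are identical on these events. They differ on $\{\sigma=\tau<T\}$, where $Q_{\sigma\tau}=X'_\sigma$ and $H_{\sigma\tau}=X_\sigma$ with $X_\sigma-X'_\sigma\in\mathcal{K}_\sigma$ from \eqref{eq:ass:penalties}, and on $\{\sigma<\tau\}$, where $Q_{\sigma\tau}=X_\sigma$ and $H_{\sigma\tau}=0$.

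For the reverse direction, I would assume \eqref{eq:prop:hedge-equivalence}. On the events where $Q$ and $H$ agree, the game-hedging inclusion is immediate. On $\{\sigma=\tau<T\}$, the $H$-condition at this $\tau$ yields $y_\sigma-X_\sigma\in\mathcal{K}_\sigma$, and adding the cone inclusion $X_\sigma-X'_\sigma\in\mathcal{K}_\sigma$ produces $y_\sigma-X'_\sigma\in\mathcal{K}_\sigma$, which is the game hedge. On $\{\sigma<\tau\}$, I would apply the $H$-condition at the auxiliary stopping time $\tau'=\sigma$: since $\sigma\wedge\tau'=\sigma$ and $H_{\sigma\sigma}=X_\sigma$ on $\{\sigma<T\}$, this directly gives $y_\sigma-X_\sigma\in\mathcal{K}_\sigma$, matching the game hedge on this event.

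For the forward direction, I would assume the game hedge. The events on which $Q$ and $H$ agree are again automatic. On $\{\sigma=\tau<T\}$, applying the game hedge at the auxiliary time $\tau'=T$ (where $Q_{\sigma T}=X_\sigma$ on $\{\sigma<T\}$) produces $y_\sigma-X_\sigma\in\mathcal{K}_\sigma$, which is exactly the $H$-inclusion here. The hard part will be $\{\sigma<\tau\}$: the $H$-condition requires $y_\sigma\in\mathcal{K}_\sigma$, whereas the game hedge directly yields only $y_\sigma-X_\sigma\in\mathcal{K}_\sigma$. To close this gap I expect to exploit the self-financing property of $y$ past cancellation together with the inductive structure of Construction~\ref{constr:writer}: specifically, that the hedging strategy coming from $\mathcal{Z}^a_\sigma$ in the proof of Proposition~\ref{prop:hedge-equivalents} can be propagated forward in a self-financing way to a terminal-solvent endowment, which then pulls back to solvency of $y_\sigma$ in $\mathcal{K}_\sigma$ after the $X_\sigma$ offset is accounted for. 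This last reconciliation is the main technical obstacle in the proof.
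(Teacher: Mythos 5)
Your case decomposition and the ``if'' direction (the $H$-condition implies the game hedge) are correct and match the paper's argument in substance; the paper executes it with indicator-function algebra rather than explicit cases, but the content is the same, and your use of the auxiliary stopping time $\tau'=\sigma$ is a clean substitute for the paper's derivation of \eqref{eq:B} from \eqref{eq:B'} by specializing $\tau=0,\ldots,T-1$. In the ``only if'' direction your handling of $\{\sigma\ge\tau\}$ is also correct: specializing the game-hedging inclusion at $\tau'=T$ gives $y_\sigma-X_\sigma\in\mathcal{K}_\sigma$ on $\{\sigma<T\}$, which covers $\{\sigma=\tau<T\}$.

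Your diagnosis of $\{\sigma<\tau\}$ as the hard case in the ``only if'' direction is right, but the route you propose to close it cannot succeed. A hedging pair $(\sigma,y)$ need not keep $y$ solvent past $\sigma$: self-financing allows $y_t=y_\sigma$ for all $t\ge\sigma$, so $y_T=y_\sigma$, and the game hedge constrains $y_\sigma$ only up to $X_\sigma+\mathcal{K}_\sigma$ (via $\tau=T$). Since Remark~\ref{rem:negative} permits $X_\sigma\notin\mathcal{K}_\sigma$, the containment $y_\sigma\in\mathcal{K}_\sigma$ on $\{\sigma<\tau\}$ genuinely does not follow from the game hedge: take $\sigma\equiv0$ and $y\equiv X_0$ with $X_0=X'_0=Y_0$ non-solvent, and the game hedge holds while \eqref{eq:prop:hedge-equivalence} fails at any $\tau>0$. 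Construction~\ref{constr:writer} and Proposition~\ref{prop:hedge-equivalents} do not help either, as they concern which initial endowments admit a hedge, not properties of an arbitrary hedging strategy. What resolves the issue --- and what the paper's own argument in fact delivers, even if not stated in so many words --- is that the inclusion on $\{\sigma<\tau\}$ is never needed: adding \eqref{eq:A} and \eqref{eq:A*} yields $y_{\sigma\wedge\tau}\mathbf{1}_{\{\sigma\ge\tau\}}-H_{\sigma\tau}\in\mathcal{K}_{\sigma\wedge\tau}$, i.e.\ the restriction of \eqref{eq:prop:hedge-equivalence} to $\{\sigma\ge\tau\}$, which is precisely the form used in the proof of Theorem~\ref{th:ask-price} when Lemma~\ref{lem:hedge-equivalence} is reformulated there as ``$y_\tau-H_{\sigma\tau}\in\mathcal{K}_\tau$ for all $\tau\in\mathcal{T}$ with $\tau\le\sigma$''. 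You should therefore abandon the attempt to manufacture solvency of $y_\sigma$ after cancellation and instead observe that the $\{\sigma<\tau\}$ contribution to \eqref{eq:prop:hedge-equivalence} is irrelevant: on $\{\sigma\ge\tau\}$ the payoffs $H_{\sigma\tau}$ and $Q_{\sigma\tau}$ differ only by $(X_\sigma-X'_\sigma)\mathbf{1}_{\{\sigma=\tau<T\}}\in\mathcal{K}_{\sigma\wedge\tau}$, which is all the equivalence needs.
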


\begin{proof} Throughout the proof we shall make frequent and implicit use of the fact that $\mathcal{K}_t$ is a pointed cone for all $t$, and in particular the property $0\in\mathcal{K}_t$.

Fix any $(\sigma,y)\in\mathcal{T}\times\Phi$ satisfying \eqref{eq:prop:hedge-equivalence}. In view of \eqref{eq:Yprime}, this implies that
 \begin{align}
  (y_{\sigma\wedge \tau} - Y_{\sigma\wedge \tau})\mathbf{1}_{\{\sigma>\tau\}} & \in\mathcal{K}_{\sigma\wedge\tau}  \text{ for all } \tau\in\mathcal{T},\label{eq:A}\\
  (y_{\sigma} - X_{\sigma})\mathbf{1}_{\{\sigma=\tau<T\}} & \in\mathcal{K}_{\sigma\wedge\tau}  \text{ for all } \tau\in\mathcal{T}, \label{eq:B'}\\
  (y_{\sigma} - X'_{\sigma})\mathbf{1}_{\{\sigma=\tau=T\}} & \in\mathcal{K}_{\sigma\wedge\tau}.  \text{ for all } \tau\in\mathcal{T}, \label{eq:C'}
 \end{align}
 Substituting $\tau=0,\ldots,T-1$ into \eqref{eq:B'} and $\tau=T$ into \eqref{eq:C'} gives
 \begin{align}
  (y_{\sigma} - X_{\sigma})\mathbf{1}_{\{\sigma<T\}} & \in\mathcal{K}_{\sigma}, \label{eq:B}\\
  (y_{\sigma} - X'_{\sigma})\mathbf{1}_{\{\sigma=T\}} & \in\mathcal{K}_{\sigma}. \label{eq:C}
  \end{align}
Property \eqref{eq:ass:penalties} together with \eqref{eq:B}--\eqref{eq:C} then leads to $y_\sigma-X'_\sigma\in\mathcal{K}_\sigma$, whence
\begin{multline}
  (y_{\sigma\wedge\tau} - X'_{\sigma\wedge\tau})\mathbf{1}_{\{\sigma=\tau\}} = (y_{\sigma} - X'_{\sigma})\mathbf{1}_{\{\sigma=\tau\}}  \\\in \mathbf{1}_{\{\sigma=\tau\}}\mathcal{K}_{\sigma} =\mathbf{1}_{\{\sigma=\tau\}}\mathcal{K}_{\sigma\wedge\tau} \subseteq\mathcal{K}_{\sigma\wedge\tau}  \text{ for all } \tau\in\mathcal{T}. \label{eq:D}
  \end{multline}
  It also follows from \eqref{eq:B} that
  \begin{multline}
  (y_{\sigma\wedge\tau} - X_{\sigma\wedge\tau})\mathbf{1}_{\{\sigma<\tau\}} 
  = (y_{\sigma} - X_{\sigma})\mathbf{1}_{\{\sigma<\tau\}}
  = (y_{\sigma} - X_{\sigma})\mathbf{1}_{\{\sigma<T\}}\mathbf{1}_{\{\sigma<\tau\}}
  \\\in \mathbf{1}_{\{\sigma<\tau\}}\mathcal{K}_{\sigma} = \mathbf{1}_{\{\sigma<\tau\}}\mathcal{K}_{\sigma\wedge\tau} \subseteq
  \mathcal{K}_{\sigma\wedge\tau}\text{ for all } \tau\in\mathcal{T}. \label{eq:E}
 \end{multline}
Properties \eqref{eq:A}, \eqref{eq:D} and \eqref{eq:E} lead to \eqref{eq:def:seller-hedge}, and so $(\sigma,y)$ hedges the game option for the seller.

Suppose conversely that $(\sigma,y)\in\mathcal{T}\times\Phi$ hedges the game option for the seller. Property \eqref{eq:def:seller-hedge} gives~\eqref{eq:A} and (upon choosing $\tau=T$)
\[
  y_{\sigma} - X_{\sigma}\mathbf{1}_{\{\sigma<T\}} - X'_{\sigma}\mathbf{1}_{\{\sigma=T\}} \in\mathcal{K}_{\sigma},
\]
from which it follows that
\begin{multline}
    (y_{\sigma\wedge\tau} - X_{\sigma})\mathbf{1}_{\{\sigma=\tau<T\}} + (y_{\sigma\wedge\tau} - X'_{\sigma})\mathbf{1}_{\{\sigma=\tau=T\}} \\
    \begin{aligned}
    &= y_{\sigma\wedge\tau}\mathbf{1}_{\{\sigma=\tau\}} - X_{\sigma}\mathbf{1}_{\{\sigma=\tau<T\}} - X'_{\sigma}\mathbf{1}_{\{\sigma=\tau=T\}} \\
    &= \left(y_{\sigma\wedge\tau} - X_{\sigma}\mathbf{1}_{\{\sigma<T\}} - X'_{\sigma}\mathbf{1}_{\{\sigma=T\}}\right)\mathbf{1}_{\{\sigma=\tau\}} \\
   &= \left(y_{\sigma} - X_{\sigma}\mathbf{1}_{\{\sigma<T\}} - X'_{\sigma}\mathbf{1}_{\{\sigma=T\}}\right)\mathbf{1}_{\{\sigma=\tau\}}
   \end{aligned}\\
   \in \mathbf{1}_{\{\sigma=\tau\}}\mathcal{K}_{\sigma} = \mathbf{1}_{\{\sigma=\tau\}}\mathcal{K}_{\sigma\wedge\tau}\subseteq \mathcal{K}_{\sigma\wedge\tau} \text{ for all } \tau\in\mathcal{T}.\label{eq:A*}
\end{multline}
Properties \eqref{eq:A} and \eqref{eq:A*} together give \eqref{eq:prop:hedge-equivalence}, which completes the proof.
\end{proof}

\begin{proof}[Proof of Theorem~\ref{th:ask-price}]
Lemma \ref{lem:hedge-equivalence} shows that, for $\sigma\in\mathcal{T}$ given, the pair $(\sigma,y)\in\mathcal{T}\times\Phi$ hedges the game option $(Y,X,X')$ for the seller if and only if \eqref{eq:prop:hedge-equivalence} holds, equivalently
\[
 y_\tau - H_{\sigma\tau} \in\mathcal{K}_\tau \text{ for all }\tau\in\mathcal{T}\text{ such that } \tau\le\sigma.
\]
 Definition~\ref{def:ask} and the finiteness of $\mathcal{T}$ then give that
 \begin{multline*}
  \pi^a_i(Y,X,X') \\
  \begin{aligned}
  &= \min_{\sigma\in\mathcal{T}}\inf\{z\in\mathbb{R}:(\sigma,y)\text{ hedges } (Y,X,X')\text{ for the seller and }y_0=ze^i\} \\
  &= \min_{\sigma\in\mathcal{T}}p^a_i(\sigma),
  \end{aligned}
 \end{multline*}
 where\index{pai@$p^a_i(\sigma)$}
 \begin{multline}
  p^a_i(\sigma):=\inf\{z\in\mathbb{R}:y\in\Phi\text{ such that } \\ y_0=ze^i,y_\tau - H_{\sigma\tau} \in\mathcal{K}_\tau \text{ for all }\tau\in\mathcal{T},\tau\le\sigma\}.\label{eq:def:pai}
 \end{multline}

 This means that any pair $(\sigma,y)$ hedges the game option $(Y,X,X')$ for the seller if and only if, in the terminology of \citet{Roux_Zastawniak2015}, the strategy $y$ superhedges an option with payoff process $H_{\sigma}=(H_{\sigma t})_{t=0}^T$ that can be exercised by the buyer at any stopping time $\tau$ satisfying
 \[
  \{\tau=t\}\subseteq\mathcal{E}_t:=\{t\le\sigma\}\text{ for all } t=0,\ldots,T
 \]
 for the seller. Intuitively, this is an American option with (random) expiration date $\sigma$. The quantity $p^a_i(\sigma)$ in \eqref{eq:def:pai} is the ask price of such an option in asset~$i$, and \citet[Theorem~3]{Roux_Zastawniak2015} established that
\[
 p^a_i(\sigma) = \max_{\chi\in\mathcal{X}^\mathcal{E}}\sup_{(\mathbb{P},S)\in\mathcal{P}^i(\chi)}\mathbb{E}_\mathbb{P}((H_\sigma\cdot S)_\chi)= \max_{\chi\in\mathcal{X}^\mathcal{E}}\max_{(\mathbb{P},S)\in\bar{\mathcal{P}}^i(\chi)}\mathbb{E}_\mathbb{P}((H_\sigma\cdot S)_\chi)
\]
where $H_\sigma\cdot S=(H_{\sigma t}\cdot S_t)_{t=0}^T$ and
\[
 \mathcal{X}^\mathcal{E} := \{\chi\in\mathcal{X}:\{\chi_t>0\}\subseteq\mathcal{E}_t\text{ for all }t=0,\ldots,T\}=\mathcal{X}\wedge\sigma.
\]
It then follows that
\begin{equation}
 p^a_i(\sigma)=\max_{\chi\in\mathcal{X}}\sup_{(\mathbb{P},S)\in\mathcal{P}^i(\chi\wedge\sigma)}\mathbb{E}_\mathbb{P}((H_\sigma\cdot S)_{\chi\wedge\sigma})=\max_{\chi\in\mathcal{X}}\max_{(\mathbb{P},S)\in\bar{\mathcal{P}}^i(\chi\wedge\sigma)}\mathbb{E}_\mathbb{P}((H_\sigma\cdot S)_{\chi\wedge\sigma}),
\end{equation}
so that
\begin{align}
 \pi^a_i(Y,X,X') &= \min_{\sigma\in\mathcal{T}}\max_{\chi\in\mathcal{X}}\sup_{(\mathbb{P},S)\in\mathcal{P}^i(\chi\wedge\sigma)}\mathbb{E}_\mathbb{P}((H_\sigma\cdot S)_{\chi\wedge\sigma})\nonumber\\
 &=\min_{\sigma\in\mathcal{T}}\max_{\chi\in\mathcal{X}}\max_{(\mathbb{P},S)\in\bar{\mathcal{P}}^i(\chi\wedge\sigma)}\mathbb{E}_\mathbb{P}((H_\sigma\cdot S)_{\chi\wedge\sigma}). \label{eq:p-max-eta-2}
\end{align}

Fix now any $\sigma\in\mathcal{T}$, $\chi\in\mathcal{X}$ and $(\mathbb{P},S)\in\bar{\mathcal{P}}(\chi\wedge\sigma)$ and note that
\begin{align}
 &(H_\sigma\cdot S)_{\chi\wedge\sigma} \nonumber\\
 &= (Q_{\sigma\cdot}\cdot S_{\sigma\wedge\cdot})_\chi + (\chi_\sigma^\ast\mathbf{1}_{\{\sigma<T\}}-\chi^\ast_{\sigma+1})X_\sigma\cdot S_\sigma + (\chi^\ast_\sigma\mathbf{1}_{\{\sigma=T\}} - \chi_\sigma)X'_\sigma\cdot S_\sigma\nonumber\\
 &= (Q_{\sigma\cdot}\cdot S_{\sigma\wedge\cdot})_\chi + \left(\chi_\sigma\mathbf{1}_{\{\sigma<T\}}-\chi^\ast_{\sigma+1}\mathbf{1}_{\{\sigma=T\}}\right)X_\sigma\cdot S_\sigma - \chi_\sigma\mathbf{1}_{\{\sigma<T\}}X'_\sigma\cdot S_\sigma\nonumber\\
 &= (Q_{\sigma\cdot}\cdot S_{\sigma\wedge\cdot})_\chi + \chi_\sigma\mathbf{1}_{\{\sigma<T\}}(X_\sigma-X'_\sigma)\cdot S_\sigma. \label{eq:gamma-eta}
\end{align}
This follows from the properties of $\chi^\ast$: in particular $\chi^\ast_{T+1}=0$, $\chi^\ast_T=\chi_T$ and $\chi^\ast_\sigma = \chi^\ast_{\sigma+1}+\chi_\sigma$. Since $X_\sigma-X'_\sigma\in\mathcal{K}_\sigma$ by \eqref{eq:ass:penalties} and $S_\sigma\in\mathcal{K}^\ast_\sigma$, we immediately have
\begin{equation} \label{eq:gamma-eta:1}
 (H_\sigma\cdot S)_{\chi\wedge\sigma} \ge (Q_{\sigma\cdot}\cdot S_{\sigma\wedge\cdot})_\chi.
\end{equation}
Define the stopping time $\chi'=(\chi'_t)\in\mathcal{X}$ by
\[
 \chi'_t := \chi_t\mathbf{1}_{\{t<\sigma\}} +\chi^{\ast}_\sigma\mathbf{1}_{\{t=T\}} \text{ for } t=0,\ldots,T;
\]
then $\chi\wedge\sigma=\chi'\wedge\sigma$ and so $(\mathbb{P},S)\in\bar{\mathcal{P}}(\chi'\wedge\sigma)$. Moreover, since $\chi'_\sigma\mathbf{1}_{\{\sigma<T\}}=0$ it follows from \eqref{eq:gamma-eta} that
\begin{equation} \label{eq:gamma-eta:2}
  (H_\sigma\cdot S)_{\chi\wedge\sigma} = (H_\sigma\cdot S)_{\chi'\wedge\sigma} = (Q_{\sigma\cdot}\cdot S_{\sigma\wedge\cdot})_{\chi'}.
\end{equation}
Combining \eqref{eq:gamma-eta:1} and \eqref{eq:gamma-eta:2} then gives
\[
 \max_{\chi\in\mathcal{X}}\max_{(\mathbb{P},S)\in\bar{\mathcal{P}}^i(\chi\wedge\sigma)}\mathbb{E}_\mathbb{P}((H_\sigma\cdot S)_{\chi\wedge\sigma}) = \max_{\chi\in\mathcal{X}}\max_{(\mathbb{P},S)\in\bar{\mathcal{P}}^i(\chi\wedge\sigma)}\mathbb{E}_\mathbb{P}((Q_{\sigma\cdot}\cdot S_{\sigma\wedge\cdot})_\chi),
\]
and the result follows from \eqref{eq:p-max-eta-2}.
\end{proof}

\section{Numerical examples}
\label{sec:numerical}

Three numerical examples are presented in this section. The first is a toy example to illustrate the constructions in Section~\ref{sec:main}. The second illustrates Theorem~\ref{th:ask-price} and serves as a minimal counterexample to Theorem~3.1 of \citet{Kifer2013a}. The final example has a more realistic flavour.

\begin{example}\label{ex:toy}
 A game option $(Y,X,X')$ in a binary two-step two-currency model is presented in Figure \ref{fig:two-step-model}. The model is recombinant and has transaction costs only at node $\mathrm{u}$ at time $1$, and the option is path-independent and has no cancellation penalties at time $2$.
 \begin{figure}
 \begin{center}
 \begin{tikzpicture}[model,twostep, max label lines = 2.5]
  \node (root) {$\begin{gathered}\tfrac{1}{\pi^{12}_0} = \pi^{21}_0 = 10\\
	\begin{aligned}Y_0&=\twovector{0}{0} \\ X_0&=\twovector{0}{5} \\ X'_0&=\twovector{0}{\tfrac{5}{2}}\end{aligned}\end{gathered}$}
    child { node[label=left:u] (u) {$\begin{gathered}\begin{aligned}\pi^{12}_1 &=\tfrac{1}{8} \\ \pi^{21}_1 &= 16\end{aligned}\\
	\begin{aligned}Y_1&=\twovector{0}{3} \\ X_1&=\twovector{0}{4} \\ X'_1&=\twovector{0}{\tfrac{7}{2}}\end{aligned}\end{gathered}$}
    	child { node[label=right:uu] (uu) {$\begin{gathered}\tfrac{1}{\pi^{12}_2} =\pi^{21}_2 = 16\\
	Y_2=X_2=X'_2=\twovector{0}{9}\end{gathered}$}}
    	child { node [draw=none] {} edge from parent [draw=none]}
	}
    child { node[label=left:d] (d) {$\begin{gathered}\tfrac{1}{\pi^{12}_1} =\pi^{21}_1 = 6\\
	\begin{aligned}Y_1&=\twovector{0}{0} \\ X_1&=\twovector{0}{1} \\ X'_1&=\twovector{0}{\tfrac{1}{2}}\end{aligned}\end{gathered}$}
    	child { node [draw=none] {} edge from parent [draw=none]}
    	child { node[label=right:dd] (dd) {$\begin{gathered}\tfrac{1}{\pi^{12}_2} = \pi^{21}_2 = 4\\
	Y_2=X_2=X'_2=\twovector{0}{0}\end{gathered}$}}
	};

	
	\node[label=right:{ud,du}] (ud) at (2\leveldistance,0) {$\begin{gathered}\tfrac{1}{\pi^{12}_2} = \pi^{21}_2 = 10\\
	Y_2=X_2=X'_2=\twovector{0}{4}\end{gathered}$};
	
    \draw (u) -- (ud);
    \draw (d) -- (ud);
\end{tikzpicture}
\end{center}
  \caption{Game option in binary two-step two-currency model, Example \ref{ex:toy}}
  \label{fig:two-step-model}
 \end{figure}
 
 Let us use Construction \ref{constr:writer} to find the set of hedging endowments for the seller. Clearly
 \begin{align*}
  \mathcal{Z}^{a\mathrm{uu}}_2 &= \left\{\twovector{x^1}{x^2}\in\mathbb{R}^2:16x^1 + x^2 \ge 9\right\},\\
  \mathcal{Z}^{a\mathrm{ud}}_2 &= \left\{\twovector{x^1}{x^2}\in\mathbb{R}^2:10x^1 + x^2 \ge 4\right\},\\
  \mathcal{Z}^{a\mathrm{dd}}_2 &= \left\{\twovector{x^1}{x^2}\in\mathbb{R}^2:4x^1 + x^2 \ge 0\right\}
 \end{align*}
 at time $t=2$. 
 
 For time $t=1$, consider the node $\mathrm{u}$. We obtain $\mathcal{W}^{a\mathrm{u}}_1$ from \eqref{eq:constr:C'} and $\mathcal{V}^{a\mathrm{u}}_1$ from \eqref{eq:constr:B}; both procedures are shown graphically in Figure \ref{fig:ex:toy:Zau1}. Observe that the magnitude of the transaction costs at this node means that $\mathcal{W}^{a\mathrm{u}}_1+\mathcal{K}^\mathrm{u}_1=\mathcal{W}^{a\mathrm{u}}_1$, whence $\mathcal{V}^{a\mathrm{u}}_1=\mathcal{W}^{a\mathrm{u}}_1$. The next step is to compute $\mathcal{Z}^{a\mathrm{u}}_1$ from \eqref{eq:constr:D}. As shown in Figure \ref{fig:ex:toy:Zau1}, this can be done by finding the intersection of $\mathcal{V}^{a\mathrm{u}}_1$ and $\mathcal{Y}^{a\mathrm{u}}_1$ first and then taking the union with $\mathcal{X}^{a\mathrm{u}}_1$. The non-convexity of $\mathcal{Z}^{a\mathrm{u}}_1$ is due to the magnitude of the transaction costs and the shape of the payoff at this node. Similar considerations at the node $\mathrm{d}$ give that
 \begin{equation} \label{eq:ex:Zad1}
  \mathcal{Z}^{a\mathrm{d}}_1 = \mathcal{X}^{a\mathrm{d}}_1 = \left\{\twovector{x^1}{x^2}:6x^1+x^2\ge1\right\}.
 \end{equation}

   \begin{figure}
  \newcommand{\minx}{-0.1}
  \newcommand{\maxx}{1.1}
  \newcommand{\miny}{-7}
  \newcommand{\maxy}{10.0}
   \centering
   \begin{tabular}{@{}r@{}r@{}}
    \footnotesize\begin{tikzpicture}[x=0.5/(\maxx-(\minx))*\figurewidth,y=0.5/(\maxy-(\miny))*\figureheight]
   \draw[draw=none,pattern=north east lines, pattern color=lightgray] ({0 - (\maxy - 9)/16} ,\maxy) -- (0, 9) -- ({0-(\miny-9)/16},\miny) -- (\maxx,\miny) -- (\maxx,\maxy) -- (\minx,\maxy) -- cycle; 
    \draw[draw=none,pattern=north west lines, pattern color=lightgray] (\minx ,{4-(\minx-0)*10}) -- (0, 4) -- ({0-(\miny-4)/10},\miny) -- (\maxx,\miny) -- (\maxx,\maxy) -- (\minx,\maxy) -- cycle; 
  \draw[draw=none,fill=lightgray,opacity=0.5] ({5/6 - (\maxy - (-13/3))/16} ,\maxy) -- (5/6, -13/3) -- (\maxx,{-13/3 -(\maxx-5/6)*10}) -- (\maxx,\maxy) -- cycle; 
   
   \draw[thick] ({5/6 - (\maxy - (-13/3))/16} ,\maxy) -- (5/6, -13/3) -- (\maxx,{-13/3 -(\maxx-5/6)*10});
   \draw (\maxx,\maxy) node[below left] {$\mathcal{W}^{a\mathrm{u}}_1=\mathcal{Z}^{a\mathrm{uu}}_2\cap\mathcal{Z}^{a\mathrm{ud}}_2$};
   
    \draw ({0 - (\maxy - 9)/16} ,\maxy) -- (0, 9) -- ({0-(\miny-9)/16},\miny) node[sloped,above,pos=0.3] {$\mathcal{Z}^{a\mathrm{uu}}_2$};
     \draw (\minx ,{4-(\minx-0)*10}) -- (0, 4) -- ({0-(\miny-4)/10},\miny) node[sloped,above,pos=0.25] {$\mathcal{Z}^{a\mathrm{ud}}_2$};

     \foreach \y/\yname in {{-13/3}/-\tfrac{13}{3},9,4}
        \draw (0,\y) node[left] {\footnotesize$\yname$} -- ++(1ex,0);    
 
      \foreach \x/\xname in {{5/6}/\tfrac{5}{6},{2/5}/\tfrac{2}{5}}
        \draw (\x,0) node[below] {\footnotesize$\xname$} -- ++(0,1ex);         

      \foreach \x/\xname in {{9/16}/\tfrac{9}{16}}
        \draw (\x,0) -- ++(0,1ex) node[above] {\footnotesize$\xname$};
        
    \draw (\minx,0) -- (\maxx,0) node[above left] {$x^1$};
    \draw (0,\maxy) -- (0,\miny) node[above right] {$x^2$};
 \end{tikzpicture}
 &
    \footnotesize\begin{tikzpicture}[x=0.5/(\maxx-(\minx))*\figurewidth,y=0.5/(\maxy-(\miny))*\figureheight]
   \draw[draw=none,pattern=north east lines, pattern color=lightgray] ({5/6 - (\maxy - (-13/3))/16} ,\maxy) -- (5/6, -13/3) -- (\maxx,{-13/3 -(\maxx-5/6)*10}) -- (\maxx,\maxy) -- cycle; 
   \draw[draw=none,fill=lightgray,opacity=0.5] ({5/6 - (\maxy - (-13/3))/16} ,\maxy) -- (5/6, -13/3) -- (\maxx,{-13/3 -(\maxx-5/6)*10}) -- (\maxx,\maxy) -- cycle; 
   
   \draw[thick] ({5/6 - (\maxy - (-13/3))/16} ,\maxy) -- (5/6, -13/3) -- (\maxx,{-13/3 -(\maxx-5/6)*10});
   \draw (\maxx,\maxy) node[below left] {$\mathcal{V}^{a\mathrm{u}}_1=\mathcal{W}^{a\mathrm{u}}_1$};
   
   \draw[dashed] ({5/6 - (\maxy - (-13/3))/16} ,\maxy)  -- (5/6, -13/3) -- (\maxx,{-13/3 -(\maxx-5/6)*8}); 
   \draw[dashed] (\minx, {0 - (\minx-0)*16}) -- (0, 0) -- ({0 - (\miny-0)/8},\miny)  node[sloped, above, pos=0.5] {lower boundary of $\mathcal{K}^\mathrm{u}_1$};
   
     \foreach \y/\yname in {{-13/3}/-\tfrac{13}{3},9}
        \draw (0,\y) node[left] {\footnotesize$\yname$} -- ++(1ex,0);    
 
      \foreach \x/\xname in {{9/16}/\tfrac{9}{16},{5/6}/\tfrac{5}{6}}
        \draw (\x,0) node[below] {\footnotesize$\xname$} -- ++(0,1ex);         

    \draw (\minx,0) -- (\maxx,0) node[above left] {$x^1$};
    \draw (0,\maxy) -- (0,\miny) node[above right] {$x^2$};
 \end{tikzpicture}
 \\
    \footnotesize\begin{tikzpicture}[x=0.5/(\maxx-(\minx))*\figurewidth,y=0.5/(\maxy-(\miny))*\figureheight]
   \draw[draw=none,pattern=north east lines, pattern color=lightgray]  ({5/6 - (\maxy - (-13/3))/16} ,\maxy) -- (5/6, -13/3) -- (\maxx,{-13/3 -(\maxx-5/6)*10}) -- (\maxx,\maxy) -- cycle; 
   \draw[draw=none,pattern=north west lines, pattern color=lightgray]  (\minx, {3 - (\minx-0)*16})  -- (0, 3) -- (\maxx,{3 -(\maxx-0)*8}) -- (\maxx,\maxy) -- (\minx,\maxy) -- cycle; 
   \draw[draw=none,fill=lightgray,opacity=0.5] ({5/6 - (\maxy - (-13/3))/16} ,\maxy) -- (3/4,-3) -- (\maxx,{3 -(\maxx-0)*8}) -- (\maxx,\maxy) -- cycle; 
   
   \draw ({5/6 - (\maxy - (-13/3))/16} ,\maxy) -- (5/6, -13/3) node[sloped, above, pos=0.4] {$\mathcal{V}^{a\mathrm{u}}_1$} -- (\maxx,{-13/3 -(\maxx-5/6)*10});
    \draw (\minx, {3 - (\minx-0)*16})  -- (0, 3) -- (\maxx,{3 -(\maxx-0)*8}) node[sloped, above,pos=0.2] {$\mathcal{Y}^{a\mathrm{u}}_1$};
   \draw[thick] ({5/6 - (\maxy - (-13/3))/16} ,\maxy) -- (3/4,-3) -- (\maxx,{3 -(\maxx-0)*8});
    \draw (\maxx,\maxy) node[below left] {$\mathcal{V}^{a\mathrm{u}}_1\cap\mathcal{Y}^{a\mathrm{u}}_1$};
  
     \foreach \y/\yname in {{-13/3}/-\tfrac{13}{3},9,-3,3}
        \draw (0,\y) node[left] {\footnotesize$\yname$} -- ++(1ex,0);    
 
      \foreach \x/\xname in {{3/4}/\tfrac{3}{4},{5/6}/\tfrac{5}{6},{3/8}/\tfrac{3}{8}}
        \draw (\x,0) node[below] {\footnotesize$\xname$} -- ++(0,1ex);         

      \foreach \x/\xname in {{9/16}/\tfrac{9}{16}}
        \draw (\x,0) -- ++(0,1ex) node[above] {\footnotesize$\xname$};
        
    \draw (\minx,0) -- (\maxx,0) node[above left] {$x^1$};
    \draw (0,\maxy) -- (0,\miny) node[above right] {$x^2$};
 \end{tikzpicture}
 &
   \footnotesize\begin{tikzpicture}[x=0.5/(\maxx-(\minx))*\figurewidth,y=0.5/(\maxy-(\miny))*\figureheight]
   \draw[draw=none,pattern=north east lines, pattern color=lightgray]  ({5/6 - (\maxy - (-13/3))/16} ,\maxy) -- (3/4,-3) -- (\maxx,{3 -(\maxx-0)*8}) -- (\maxx,\maxy) -- cycle; 
   \draw[draw=none,pattern=north west lines, pattern color=lightgray]  (\minx, {4 - (\minx-0)*16})  -- (0, 4) -- (\maxx,{4 -(\maxx-0)*8}) -- (\maxx,\maxy) -- (\minx,\maxy) -- cycle; 
   \draw[draw=none,fill=lightgray,opacity=0.5] (\minx, {4 - (\minx-0)*16})  -- (0, 4) -- (5/8,-1) -- (3/4,-3) -- (\maxx,{3 -(\maxx-0)*8}) -- (\maxx,\maxy) -- (\minx,\maxy) -- cycle; 
   
   \draw ({5/6 - (\maxy - (-13/3))/16} ,\maxy) -- (3/4,-3)  node[sloped, above, pos=0.4] {$\mathcal{V}^{a\mathrm{u}}_1\cap\mathcal{Y}^{a\mathrm{u}}_1$} -- (\maxx,{3 -(\maxx-0)*8});
   \draw (\minx, {4 - (\minx-0)*16})  -- (0, 4) -- (\maxx,{4 -(\maxx-0)*8}) node[sloped, above,pos=0.2] {$\mathcal{X}^{a\mathrm{u}}_1$};
   \draw[thick] (\minx, {4 - (\minx-0)*16})  -- (0, 4) -- (5/8,-1) -- (3/4,-3) -- (\maxx,{3 -(\maxx-0)*8});
   \draw (\maxx,\maxy) node[below left] {$\mathcal{Z}^{a\mathrm{u}}_1=\left[\mathcal{V}^{a\mathrm{u}}_1\cap\mathcal{Y}^{a\mathrm{u}}_1\right]\cup\mathcal{X}^{a\mathrm{u}}_1$};
  
   \draw (0,4) node {$\bullet$};
    \draw (0,4) node[above right] {$\hat{y}_1$};
 
     \foreach \y/\yname in {9,-3,4,-1}
        \draw (0,\y) node[left] {\footnotesize$\yname$} -- ++(1ex,0);    
 
      \foreach \x/\xname in {{3/4}/\tfrac{3}{4},{1/2}/\tfrac{1}{2}}
        \draw (\x,0) node[below] {\footnotesize$\xname$} -- ++(0,1ex);         

              \foreach \x/\xname in {{9/16}/\tfrac{9}{16},{5/8}/\tfrac{5}{8}}
        \draw (\x,0) -- ++(0,1ex) node[above] {\footnotesize$\xname$};

    \draw (\minx,0) -- (\maxx,0) node[above left] {$x^1$};
    \draw (0,\maxy) -- (0,\miny) node[above right] {$x^2$};
 \end{tikzpicture}
   \end{tabular}
    \caption{$\mathcal{W}^{a\mathrm{u}}_1$, $\mathcal{V}^{a\mathrm{u}}_1$, $\mathcal{V}^{a\mathrm{u}}_1\cap\mathcal{Y}^{a\mathrm{u}}_1$, $\mathcal{Z}^{a\mathrm{u}}_1$ and $\hat{y}_1$, Example \ref{ex:toy}}
    \label{fig:ex:toy:Zau1}
   \end{figure}
 
 Finally, following the same steps for time $t=0$ results in the sets of portfolios $\mathcal{W}^a_0$, $\mathcal{V}^a_0$ and $\mathcal{Z}^a_0$ depicted in Figure~\ref{fig:ex:toy:Z0}. The ask prices of the game option in the two currencies are the intersections of the lower boundary of $\mathcal{Z}^a_0$ with the axes, and so can be read directly from the final graph in Figure \ref{fig:ex:toy:Z0}, namely
 \begin{align*}
  \pi^a_1(Y,X,X') &= \tfrac{2}{5}, & \pi^a_2(Y,X,X') &= 4.
 \end{align*}

 \begin{figure}
  \newcommand{\minx}{-0.1}
  \newcommand{\maxx}{1.1}
  \newcommand{\miny}{-6}
  \newcommand{\maxy}{6}
 \centering
  \begin{tabular}{@{}r@{}r@{}}
    \footnotesize\begin{tikzpicture}[x=0.5/(\maxx-(\minx))*\figurewidth,y=0.5/(\maxy-(\miny))*\figureheight]
   \draw[draw=none,pattern=north east lines, pattern color=lightgray]  (\minx, {4 - (\minx-0)*16})  -- (0, 4) -- (5/8,-1) -- (3/4,-3) -- (\maxx,{3 -(\maxx-0)*8}) -- (\maxx,\maxy) -- (\minx,\maxy) -- cycle; 
   \draw[draw=none,pattern=north west lines, pattern color=lightgray]  (\minx, {1 - (\minx-0)*6})  -- (0, 1) -- (\maxx,{1 -(\maxx-0)*6}) -- (\maxx,\maxy) -- (\minx,\maxy) -- cycle; 
   \draw[draw=none,fill=lightgray,opacity=0.5] (\minx, {4 - (\minx-0)*16})  -- (0, 4) -- (5/8,-1) -- (3/4,-3) -- (1,-5) -- (\maxx,{1 -(\maxx-0)*6}) -- (\maxx,\maxy) -- (\minx,\maxy) -- cycle; 
   
   \draw (\minx, {4 - (\minx-0)*16})  -- (0, 4) -- (5/8,-1)  node[sloped, above,midway] {$\mathcal{Z}^{a\mathrm{u}}_1$}-- (3/4,-3) -- (\maxx,{3 -(\maxx-0)*8});
   \draw (\minx, {1 - (\minx-0)*6})  -- (0, 1) -- (\maxx,{1 -(\maxx-0)*6}) node[sloped, above,midway] {$\mathcal{Z}^{a\mathrm{d}}_1$};
   \draw[thick] (\minx, {4 - (\minx-0)*16})  -- (0, 4) -- (5/8,-1) -- (3/4,-3) -- (1,-5) -- (\maxx,{1 -(\maxx-0)*6});
   \draw (\maxx,\maxy) node[below left] {$\mathcal{W}^a_0=\mathcal{Z}^{a\mathrm{u}}_1\cap\mathcal{Z}^{a\mathrm{d}}_1$};
  
   \draw (0,4) node {$\bullet$};
   \draw (0,4) node[above right] {$\hat{y}_1$};
    
    \foreach \y/\yname in {-3,4,-1,-5}
        \draw (0,\y) node[left] {\footnotesize$\yname$} -- ++(1ex,0);    
 
    \foreach \y/\yname in {1}
        \draw (0,\y) -- ++(1ex,0) node[right] {\footnotesize$\yname$};    
 
      \foreach \x/\xname in {{3/4}/\tfrac{3}{4},{1/2}/\tfrac{1}{2},1,{1/6}/\tfrac{1}{6}}
        \draw (\x,0) node[below] {\footnotesize$\xname$} -- ++(0,1ex);         

      \foreach \x/\xname in {{5/8}/\tfrac{5}{8}}
        \draw (\x,0) -- ++(0,1ex) node[above] {\footnotesize$\xname$};         

    \draw (\minx,0) -- (\maxx,0) node[above left] {$x^1$};
    \draw (0,\maxy) -- (0,\miny) node[above right] {$x^2$};
 \end{tikzpicture}  
 &
     \footnotesize\begin{tikzpicture}[x=0.5/(\maxx-(\minx))*\figurewidth,y=0.5/(\maxy-(\miny))*\figureheight]
   \draw[draw=none,pattern=north east lines, pattern color=lightgray]  (\minx, {4 - (\minx-0)*16})  -- (0, 4) -- (5/8,-1) -- (3/4,-3) -- (1,-5) -- (\maxx,{1 -(\maxx-0)*6}) -- (\maxx,{3 -(\maxx-0)*8}) -- (\maxx,\maxy) -- (\minx,\maxy) -- cycle; 
   \draw[draw=none,fill=lightgray,opacity=0.5] (\minx, {4 - (\minx-0)*10})  -- (0, 4) -- ({0-(\miny-4)/10)},\miny) -- (\maxx,\miny) -- (\maxx,\maxy) -- (\minx,\maxy) -- cycle; 
   
   \draw[thick] (\minx, {4 - (\minx-0)*10})  -- (0, 4) -- ({0-(\miny-4)/10)},\miny);
   \draw[dashed] (\minx, {0 - (\minx-0)*10})  -- (0, 0) -- ({0-(\miny-0)/10)},\miny) node[sloped, above, midway] {lower boundary of $\mathcal{K}_0$};

   \draw (\minx, {4 - (\minx-0)*16})  -- (0, 4) -- (5/8,-1)  node[sloped, above,midway] {$\mathcal{W}^a_0$}-- (3/4,-3) -- (1,-5) -- (\maxx,{1 -(\maxx-0)*6});
   \draw (\maxx,\maxy) node[below left] {$\mathcal{V}^a_0=\mathcal{W}^a_0+\mathcal{K}_0$};
  
  \draw (0,4) node {$\bullet$};
  \draw (0,4) node[above right] {$\hat{y}_1=\hat{y}_0$};

  \foreach \y/\yname in {-3,4,-1,-5}
        \draw (0,\y) node[left] {\footnotesize$\yname$} -- ++(1ex,0);    
 
      \foreach \x/\xname in {{3/4}/\tfrac{3}{4},1,{2/5}/\tfrac{2}{5}}
        \draw (\x,0) node[below] {\footnotesize$\xname$} -- ++(0,1ex);         

       \foreach \x/\xname in {{1/2}/\tfrac{1}{2},{5/8}/\tfrac{5}{8}}
        \draw (\x,0) -- ++(0,1ex) node[above] {\footnotesize$\xname$};         

   \draw (\minx,0) -- (\maxx,0) node[above left] {$x^1$};
    \draw (0,\maxy) -- (0,\miny) node[above right] {$x^2$};
 \end{tikzpicture} 
 \\
 \footnotesize\begin{tikzpicture}[x=0.5/(\maxx-(\minx))*\figurewidth,y=0.5/(\maxy-(\miny))*\figureheight]
   \draw[draw=none,pattern=north east lines, pattern color=lightgray]  (\minx, {4 - (\minx-0)*10})  -- (0, 4) -- ({0-(\miny-4)/10)},\miny) -- (\maxx,\miny) -- (\maxx,\maxy) -- (\minx,\maxy) -- cycle; 
   \draw[draw=none,pattern=north west lines, pattern color=lightgray]  (\minx, {0 - (\minx-0)*10})  -- (0, 0) -- ({0-(\miny-0)/10)},\miny) -- (\maxx,\miny) -- (\maxx,\maxy) -- (\minx,\maxy) -- cycle; 
   \draw[draw=none,fill=lightgray,opacity=0.5]  (\minx, {4 - (\minx-0)*10})  -- (0, 4) -- ({0-(\miny-4)/10)},\miny) -- (\maxx,\miny) -- (\maxx,\maxy) -- (\minx,\maxy) -- cycle; 
  
   \draw (\minx, {0 - (\minx-0)*10})  -- (0, 0) -- ({0-(\miny-0)/10)},\miny) node[sloped, above,pos=0.7] {$\mathcal{Y}^a_0$};
   \draw[thick] (\minx, {4 - (\minx-0)*10})  -- (0, 4) -- ({0-(\miny-4)/10)},\miny) node[sloped, above,pos=0.6] {$\mathcal{V}^a_0$};
   \draw (\maxx,\maxy) node[below left] {$\mathcal{V}^a_0\cap\mathcal{Y}^a_0=\mathcal{V}^a_0$};
 
   \draw (0,4) node {$\bullet$};
   \draw (0,4) node[above right] {$\hat{y}_0$};

     \foreach \y/\yname in {4,-4}
        \draw (0,\y) node[left] {\footnotesize$\yname$} -- ++(1ex,0);    
 
      \foreach \x/\xname in {{2/5}/\tfrac{2}{5}}
        \draw (\x,0) node[below] {\footnotesize$\xname$} -- ++(0,1ex);         

    \draw (\minx,0) -- (\maxx,0) node[above left] {$x^1$};
    \draw (0,\maxy) -- (0,\miny) node[above right] {$x^2$};
 \end{tikzpicture} 
 &
   \footnotesize\begin{tikzpicture}[x=0.5/(\maxx-(\minx))*\figurewidth,y=0.5/(\maxy-(\miny))*\figureheight]
   \draw[draw=none,pattern=north east lines, pattern color=lightgray]  (\minx, {4 - (\minx-0)*10})  -- (0, 4) -- ({0-(\miny-4)/10)},\miny) -- (\maxx,\miny) -- (\maxx,\maxy) -- (\minx,\maxy) -- cycle; 
   \draw[draw=none,pattern=north west lines, pattern color=lightgray]  (\minx, {5 - (\minx-0)*10})  -- (0, 5) -- (\maxx,{5 -(\maxx-0)*10}) -- (\maxx,\miny) -- (\maxx,\maxy) -- (\minx,\maxy) -- cycle; 
   \draw[draw=none,fill=lightgray,opacity=0.5] (\minx, {4 - (\minx-0)*10})  -- (0, 4) -- ({0-(\miny-4)/10)},\miny) -- (\maxx,\miny) -- (\maxx,\maxy) -- (\minx,\maxy) -- cycle; 
   
   \draw[thick] (\minx, {4 - (\minx-0)*10})  -- (0, 4) -- ({0-(\miny-4)/10)},\miny) node[sloped, below,pos=0.7] {$\mathcal{V}^a_0\cap\mathcal{Y}^a_0$};
   \draw (\minx, {5 - (\minx-0)*10})  -- (0, 5) -- (\maxx,{5 -(\maxx-0)*10}) node[sloped, above,pos=0.7] {$\mathcal{X}^a_0$};
   \draw (\maxx,\maxy) node[below left] {$\mathcal{Z}^a_0=\mathcal{V}^a_0\cap\mathcal{Y}^a_0=\mathcal{V}^a_0$};
  
   \draw (0,4) node {$\bullet$};
   \draw (0,4) node[below left] {$\hat{y}_0$};

     \foreach \y/\yname in {4,5}
        \draw (0,\y) node[left] {\footnotesize$\yname$} -- ++(1ex,0);    
 
      \foreach \x/\xname in {{2/5}/\tfrac{2}{5}}
        \draw (\x,0) node[below] {\footnotesize$\xname$} -- ++(0,1ex);         

       \foreach \x/\xname in {{1/2}/\tfrac{1}{2}}
        \draw (\x,0) -- ++(0,1ex) node[above] {\footnotesize$\xname$};         

     \draw (\minx,0) -- (\maxx,0) node[above left] {$x^1$};
    \draw (0,\maxy) -- (0,\miny) node[above right] {$x^2$};
 \end{tikzpicture}

 \end{tabular}
    \caption{$\mathcal{W}^a_0$, $\mathcal{V}^a_0$, $\mathcal{V}^a_0\cap\mathcal{Y}^a_0$, $\mathcal{Z}^a_0$ and $\hat{y}_0$, Example \ref{ex:toy}}
    \label{fig:ex:toy:Z0}
 \end{figure}
 
 Consider now the buyer's case. Observe in the context of Remark~\ref{rem:negative} that $(-X,-Y,-X')$ satisfies the conditions of Definition~\ref{def:game option}, even though some of the payoff portfolios are not solvent, for example, $-X_0 = (0,-5)\notin\mathcal{K}_0$.
 
 Construction \ref{const:buyer} and similar calculations as in the seller's case yield
  \begin{align*}
  \mathcal{Z}^{b\mathrm{uu}}_2 &= \left\{\twovector{x^1}{x^2}\in\mathbb{R}^2:16x^1 + x^2 \ge -9\right\},\\
  \mathcal{Z}^{b\mathrm{ud}}_2 = \mathcal{Z}^{b\mathrm{du}}_2 &= \left\{\twovector{x^1}{x^2}\in\mathbb{R}^2:10x^1 + x^2 \ge -4\right\},\\
  \mathcal{Z}^{b\mathrm{dd}}_2 &= \left\{\twovector{x^1}{x^2}\in\mathbb{R}^2:4x^1 + x^2 \ge 0\right\},\\
  \mathcal{Z}^{b\mathrm{u}}_1 &=  \left\{\twovector{x^1}{x^2}\in\mathbb{R}^2:16x^1 + x^2 \ge -4,8x^1 + x^2 \ge -4\right\},\\
  \mathcal{Z}^{b\mathrm{d}}_1 &= \left\{\twovector{x^1}{x^2}\in\mathbb{R}^2:6x^1 + x^2 \ge -1\right\},\\
  \mathcal{Z}^b_0 &= \left\{\twovector{x^1}{x^2}\in\mathbb{R}^2:6x^1 + x^2 \ge -\tfrac{11}{5}\right\}
 \end{align*}
 together with the bid prices
 \begin{align*}
  \pi^b_1(Y,X,X') &= \tfrac{11}{50}, & \pi^b_2(Y,X,X') &= \tfrac{11}{5}.
 \end{align*}
 
  Let us use Construction \ref{constr:buyer:hedging} to find an optimal hedging strategy $(\check{\tau},\check{y})$ for the buyer in the scenario $\mathrm{uu}$ from the initial value $\check{y}_0:=\twovector{0}{-\tfrac{11}{5}}$. Clearly $\check{y}_0\notin\mathcal{X}^b_0=\mathcal{K}_0$ and so $\check{\tau}>0$. As can be seen from Figure \ref{ex:toy},
 \[
  \mathcal{W}^{b}_0\cap\left[\check{y}_0-\mathcal{K}_0\right] = \left\{\twovector{-\tfrac{3}{10}}{\tfrac{4}{5}}\right\}
 \]
 and so $\check{y}_1=\twovector{-\tfrac{3}{10}}{\tfrac{4}{5}}$ by \eqref{eq:constr:buyer:hedging}. Observing that
 \[
  \check{y}_1\notin\mathcal{X}^{b\mathrm{u}}_1=\left\{\twovector{x^1}{x^2}\in\mathbb{R}^2:16x^1 + x^2 \ge -3,8x^1 + x^2 \ge -3\right\},
 \]
 in other words that $\check{\tau}(\mathrm{uu})>1$, we proceed to select $\check{y}^\mathrm{u}_2$. Figure \ref{ex:toy} also shows that
 \[\mathcal{W}^{b\mathrm{u}}_1\cap\left[\check{y}_1-\mathcal{K}^\mathrm{u}_1\right] = \conv\left\{\check{y}_1,\twovector{-\tfrac{37}{40}}{\tfrac{29}{5}}, \twovector{-\tfrac{5}{6}}{\tfrac{13}{3}}, \twovector{0}{-4}\right\},\]
 so there is some freedom in the selection of $\check{y}^\mathrm{u}_2$. In fact
 \[
  \mathcal{W}^{b\mathrm{u}}_1\cap\left[\check{y}_1-\mathcal{K}^\mathrm{u}_1\right] \subseteq \mathcal{X}^{b\mathrm{uu}}_2=\left\{\twovector{x^1}{x^2}\in\mathbb{R}^2:16x^1 + x^2 \ge -9\right\},
 \]
 which means that every choice of $\check{y}^\mathrm{u}_2\in\mathcal{W}^{b\mathrm{u}}_1\cap\left[\check{y}_1-\mathcal{K}^\mathrm{u}_1\right]$ would enable the buyer to exercise the option at time $2$ at the node $\mathrm{uu}$ and remain in (or return to) a solvent position. It is also clear that $\check{\tau}(\mathrm{uu})=2$.
 
 \begin{figure}
   \newcommand{\minx}{-1}
  \newcommand{\maxx}{0.1}
  \newcommand{\miny}{-5}
  \newcommand{\maxy}{7}
  \centering
  \begin{tabular}{@{}r@{}r@{}}
   \footnotesize\begin{tikzpicture}[x=0.5/(\maxx-(\minx))*\figurewidth,y=0.5/(\maxy-(\miny))*\figureheight]
   \draw[draw=none,pattern=north east lines, pattern color=lightgray]  ({-3/10-(\maxy-4/5)/16},\maxy) -- (-3/10, 4/5) -- (\maxx,{4/5 -(\maxx-(-3/10))*6}) -- (\maxx,\maxy) -- cycle; 
   \draw[draw=none,pattern=north west lines, pattern color=lightgray]  ({0-(\maxy-(-11/5))/10},\maxy)  -- (0, -11/5) -- (\maxx,{-11/5 -(\maxx-0)*10}) -- (\maxx,\miny) -- (\minx,\miny) -- (\minx,\maxy) -- cycle; 
   
   \draw ({-3/10-(\maxy-4/5)/16},\maxy) -- (-3/10, 4/5) -- (\maxx,{4/5 -(\maxx-(-3/10))*6});
   \draw ({0.5*(-3/10-(\maxy-4/5)/16+\maxx)},\maxy) node[below] {$\mathcal{W}^b_0$};
  
  \draw ({0-(\maxy-(-11/5))/10},\maxy)  -- (0, -11/5) -- (\maxx,{-11/5 -(\maxx-0)*10});
   \draw ({0.5*(\minx+\maxx)},\miny) node[above] {$\check{y}_0-\mathcal{K}_0$};
  
  \draw (0,-11/5) node {$\bullet$}; \draw (0,-11/5) node[right] {$\check{y}_0$};
  \draw (-3/10, 4/5) node {$\bullet$}; \draw (-3/10, 4/5) node[above right] {$\check{y}_1$};
  
     \foreach \y/\yname in {{4/5}/\tfrac{4}{5},{-11/5}/-\tfrac{11}{5},-1}
        \draw (0,\y) node[left] {\footnotesize$\yname$} -- ++(1ex,0);    
 
      \foreach \x/\xname in {-{3/10}/-\tfrac{3}{10}\phantom-}
        \draw (\x,0) node[below] {\footnotesize$\xname$} -- ++(0,1ex);         

    \draw (\minx,0) -- (\maxx,0) node[above left] {$x^1$};
    \draw (0,\maxy) node[below right] {$x^2$} -- (0,\miny);
 \end{tikzpicture}
 &
    \footnotesize\begin{tikzpicture}[x=0.5/(\maxx-(\minx))*\figurewidth,y=0.5/(\maxy-(\miny))*\figureheight]
   \draw[draw=none,pattern=north east lines, pattern color=lightgray]  (\minx, {13/3 - (\minx-(-5/6))*16})  -- (-5/6, 13/3) -- (\maxx,{13/3 -(\maxx-(-5/6))*10}) -- (\maxx,\maxy) -- (\minx,\maxy) -- cycle; 
   \draw[draw=none,pattern=north west lines, pattern color=lightgray]  (\minx, {4/5 - (\minx-(-3/10))*8})  -- (-3/10, 4/5) -- ({-3/10-(\miny-4/5)/16},\miny) -- (\minx,\miny) -- cycle; 
   \draw[draw=none,fill=lightgray,opacity=0.5] (-37/40,29/5) -- (-5/6, 13/3)  -- (0, -4) -- (-3/10, 4/5) -- cycle; 
   
   \draw (\minx, {13/3 - (\minx-(-5/6))*16}) -- (-5/6, 13/3) -- (\maxx,{13/3 -(\maxx-(-5/6))*10});
   \draw ({0.5*(\minx+\maxx)},\maxy) node[below] {$\mathcal{W}^{b\mathrm{u}}_1$};
   \draw (\minx, {4/5 - (\minx-(-3/10))*8})  -- (-3/10, 4/5) -- ({-3/10-(\miny-4/5)/16},\miny);
   \draw ({0.5*(\minx+(-3/10-(\miny-4/5)/16}, \miny) node[above] {$\check{y}_1-\mathcal{K}^\mathrm{u}_1$};
  
  \draw (-3/10, 4/5) node[label=right:{$\check{y}_1$}] {$\bullet$};
  
     \foreach \y/\yname in {{4/5}/\tfrac{4}{5},{29/5}/\tfrac{29}{5},{13/3}/\tfrac{13}{3},{-11/5}/-\tfrac{11}{5},-4}
        \draw (0,\y) node[left] {\footnotesize$\yname$} -- ++(1ex,0);    
 
      \foreach \x/\xname in {-{3/10}/-\tfrac{3}{10}\phantom-,-{5/6}/-\tfrac{5}{6}\phantom-,-{2/5}/-\tfrac{2}{5}\phantom-,-{37/40}/-\tfrac{37}{40}\phantom-}
        \draw (\x,0) node[below] {\footnotesize$\xname$} -- ++(0,1ex);         

    \draw (\minx,0) -- (\maxx,0) node[above left] {$x^1$};
    \draw (0,\maxy) node[below right] {$x^2$} -- (0,\miny);
 \end{tikzpicture}
  \end{tabular}
  \caption{$\check{y}_0$ and $\check{y}_1$, Example \ref{ex:toy}}
 \end{figure}

 Returning to the seller, it is straightforward to use Construction \ref{constr:writer:hedging} to create an optimal hedging strategy $(\hat{\sigma},\hat{y})$ for the seller from the initial endowment $\hat{y}_0:=\twovector{0}{4}$. From Figure \ref{fig:ex:toy:Z0} it is clear that $\hat{y}_0\notin\mathcal{X}^a_0$ (and so $\hat{\sigma}>0$), and also that
 \[
  \mathcal{W}^a_0\cap\left[\hat{y}_0-\mathcal{K}_0\right] = \left\{\twovector{0}{4}\right\} = \left\{\hat{y}_0\right\},
 \]
 and so by \eqref{eq:constr:seller:hedging} we must choose $\hat{y}_1:=\hat{y}_0$. Figure \ref{fig:ex:toy:Zau1} and \eqref{eq:ex:Zad1} shows that $\hat{y}_1\notin\mathcal{X}^a_1$ and so $\hat{\sigma}=1$. This means that if the seller follows $(\hat{\sigma},\hat{y})$ and the buyer $(\check{\tau},\check{y})$, then in scenario $\mathrm{uu}$ the option will be cancelled (but not exercised) at the node~$\mathrm{u}$, at which time the seller delivers $\twovector{0}{4}$ to the buyer.
\end{example}

\begin{example}\label{ex:counter}
Figure~\ref{fig:one-step-model} presents a game option $(Y,X,Y)$ in a binary single-step two-currency model. The implied assumption $X'=Y$ is conventional in the game options literature (cf.~Remark \ref{rem:convention}). Note also that the option satisfies Definition~\ref{def:game option} despite the fact that neither $Y_0$ nor $X_0$ are solvent (cf.~Remark~\ref{rem:negative}). 
 \begin{figure}
 \begin{center}
 \begin{tikzpicture}[model,max label lines = 2.5]
  \node (root) {$\begin{gathered}\begin{aligned}\pi^{12}_0 &= 13\\\pi^{21}_0 &= \tfrac{1}{10}\end{aligned}\\\begin{aligned}Y_0&=\twovector{-20}{1} \\ X_0&=\twovector{-15}{1}\end{aligned}\end{gathered}$}
    child { node[label=right:u] (u) {$\begin{gathered}\pi^{12}_1 =\tfrac{1}{\pi^{21}_1} = 12\\
	Y_1=X_1=\twovector{0}{0}\end{gathered}$}
	}
    child { node[label=right:d] (d) {$\begin{gathered}\pi^{12}_1 =\tfrac{1}{\pi^{21}_1} = 9\\
	Y_1=X_1=\twovector{0}{0}\end{gathered}$}
	};
\end{tikzpicture}
\end{center}
  \caption{Game option in binary single-step two-currency model, Example \ref{ex:counter}}
  \label{fig:one-step-model}
 \end{figure}
 
 It would be tempting for the seller to cancel the option at time $0$, because delivering $X_0$ to the buyer at time $0$ is effectively the same as receiving the portfolio $\twovector{15}{-1}$, which the seller could immediately convert into
 \[
  15 - \pi^{12}_0 = 2 > 0
 \]
 units of currency $1$. If the buyer exercises at time $0$ then the seller is in an even better position, because $-Y_0$ could similarly be converted into $7$ units of currency $1$. Applying Construction \ref{constr:writer} to this option gives the set $\mathcal{Z}^a_0$ presented in Figure \ref{fig:ex:counter:Z0}. Clearly
 \[
  \pi^a_1(Y,X,Y)=-2,
 \]
 which means that cancellation at time $0$ is indeed optimal for the seller. Theorem~3.2(i) of \citet{Kifer2013a} constructs a function $z_0$, also shown in Figure~\ref{fig:ex:counter:Z0}, and gives the ask price as
 \[
  z_0(0) = -2 = \pi^a_1(Y,X,Y).
 \]
 Note that $\mathcal{Z}^a_0$ is the epigraph of $z_0$, reflected around the line $x^1=x^2$; in other words, $\mathcal{Z}^a_0$ would have been the epigraph for $z_0$ had the currencies had been ordered differently.
 \begin{figure}
  \newcommand{\minx}{-20}
  \newcommand{\maxx}{90}
  \newcommand{\miny}{-8}
  \newcommand{\maxy}{2}
  \centering
\begin{tabular}{@{}r@{}r@{}}
   \footnotesize\begin{tikzpicture}[x=0.5/(\maxx-(\minx))*\figurewidth,y=0.5/(\maxy-(\miny))*\figurewidth]
   \draw[draw=none,fill=lightgray,opacity=0.5] (\maxx, {-7 - (\maxx-84)/13}) -- (84,-7) -- (25,-27/13) -- (-15,1) -- (\minx, {1 - (\minx+15)/10}) -- (\minx,\maxy) -- (\maxx,\maxy) -- (\maxx,\miny) -- cycle; 
   
   \draw[thick]  (\maxx, {-7 - (\maxx-84)/13}) -- (84,-7) -- (25,-27/13) -- (-15,1) -- (\minx, {1 - (\minx+15)/10});

   \draw (\maxx,\maxy) node[below left] {$\mathcal{Z}^a_0$};
 
      \foreach \x/\xname in {-15/-15\phantom-,-2/-2\phantom-,25,84}
        \draw (\x,0) node[below] {\footnotesize$\xname$} -- ++(0,1ex);         

     \foreach \y/\yname in {-7/-7,{-27/13}/-\frac{27}{13},1}
        \draw (0,\y) node[left] {\footnotesize$\yname$} -- ++(1ex,0);    
 
     \draw (\minx,0) -- (\maxx,0) node[above left] {$x^1$};
    \draw (0,\maxy) -- (0,\miny) node[above right] {$x^2$};
 \end{tikzpicture} &
   \renewcommand{\minx}{-8}
  \renewcommand{\maxx}{2}
  \renewcommand{\miny}{-20}
  \renewcommand{\maxy}{90}
   \footnotesize\begin{tikzpicture}[x=0.5/(\maxx-(\minx))*\figurewidth,y=0.5/(\maxy-(\miny))*\figurewidth]

   \draw[thick] ({-7 - (\maxy-84)/13},\maxy) -- (-7,84) node[right,midway] {$z_0$} -- (-27/13,25) -- (1,-15) -- ({1 - (\miny+15)/10},\miny);

      \foreach \x/\xname in {-15,-2,25,84}
        \draw (0,\x) node[left] {\footnotesize$\xname$} -- ++(1ex,0);    

     \foreach \y/\yname in {-7/-7\phantom-,{-27/13}/-\frac{27}{13}\phantom-,1}
        \draw (\y,0) node[below] {\footnotesize$\yname$} -- ++(0,1ex);         
 
     \draw (\minx,0) -- (\maxx,0) node[above left] {$x^2$};
    \draw (0,\maxy) -- (0,\miny) node[above right] {$x^1$};
 \end{tikzpicture}
 \end{tabular}
    \caption{$\mathcal{Z}^a_0$ and $z_0$, Example \ref{ex:counter}}
    \label{fig:ex:counter:Z0}
 \end{figure}
 
 We will compare the dual representations for $\pi^a_1(Y,X,Y)$ in Theorem~\ref{th:ask-price} above and Theorem~3.1 of \citet{Kifer2013a}. The model has only two stopping times, $0$ and $1$, and so this can be done easily and directly. Observe also that
 \begin{align*}
  \mathcal{K}^\ast_0 &= \cone \left\{\twovector{1}{10},\twovector{1}{13}\right\}, \\
  \mathcal{K}^{\ast\mathrm{u}}_1 &= \cone \left\{\twovector{1}{12}\right\} \subset\mathcal{K}^\ast_0, \\
  \mathcal{K}^{\ast\mathrm{d}}_1  &= \cone \left\{\twovector{1}{9}\right\}.
 \end{align*}
 This means (cf.~Definition~\ref{def:approx-martingale}) that the property $(\mathbb{P},S)=\left(\mathbb{P},\twovector{S^1}{S^2}\right)\in\bar{\mathcal{P}}^1(\chi)$ is equivalent to
 \begin{align*}
 \mathbb{P}(\mathrm{u}) &\ge \tfrac{1}{3}, & \mathbb{P}(\mathrm{d}) &= 1- \mathbb{P}(\mathrm{u})\ge 0,
 \end{align*}
 together with
 \begin{align*}
  S^1_0&=S^1_1 = 1, & 10 &\le S^2_0 \le 13, & S^2_1(\mathrm{u})&=12, & S^2_1(\mathrm{d})&=9.
 \end{align*}

 For the representation for $\pi^a_1(Y,X,Y)$ in Theorem~\ref{th:ask-price}, take first $\sigma=0$. For every $\chi\in\mathcal{X}$ and $(\mathbb{P},S)\in\bar{\mathcal{P}}^1(\chi\wedge0)$ we have
 \begin{align*}
  (Q_{0\cdot}\cdot S_{0\wedge\cdot})_\chi 
  &= \chi^\ast_1X_0\cdot S_0 + \chi_0 Y_0 \cdot S_0
  = -5\chi_0 - 15 + S_0^2,
 \end{align*}
 and therefore 
 \begin{multline*}
  \max_{\chi\in\mathcal{X}}\max_{(\mathbb{P},S)\in\bar{\mathcal{P}}^1(\chi\wedge0)}\mathbb{E}_\mathbb{P}((Q_{0\cdot}\cdot S_{0\wedge\cdot})_\chi) \\
  = \max\{-5\chi_0 - 15 + S_0^2:\chi_0\in[0,1],S_0^2\in[10,13]\}=-2.
 \end{multline*}
 Similarly, for $\sigma=1$,
 \begin{align*}
  (Q_{1\cdot}\cdot S_{1\wedge\cdot})_\chi
  &=\chi_0 Y_0 \cdot S_0 + \chi^\ast_2X_1\cdot S_1 + \chi_1 Y_1 \cdot S_1
  =\chi_0(S_0^2-20)
 \end{align*}
 and so
 \begin{multline} \label{eq:ex:counter:1}
  \max_{\chi\in\mathcal{X}}\max_{(\mathbb{P},S)\in\bar{\mathcal{P}}^1(\chi\wedge1)}\mathbb{E}_\mathbb{P}((Q_{1\cdot}\cdot S_{1\wedge\cdot})_\chi) \\
  = \max\{\chi_0(S_0^2-20):\chi_0\in[0,1],S_0^2\in[10,13]\}=0.
 \end{multline}
 Finally,
 \[
  \min_{\sigma\in\mathcal{T}}\max_{\chi\in\mathcal{X}}\max_{(\mathbb{P},S)\in\bar{\mathcal{P}}^1(\chi\wedge\sigma)}\mathbb{E}_\mathbb{P}((Q_{\sigma\cdot}\cdot S_{\sigma\wedge\cdot})_\chi) = \min\{-2,0\} = -2 = \pi^a_1(Y,X,Y)
 \]
 as expected.
 
 Now turn to the dual representation in Theorem~3.1 of \citet{Kifer2013a}. It can be written in our notation as
 \[
  V^a := \min_{\sigma\in\mathcal{T}}\max_{\chi\in\mathcal{X}}\max_{(\mathbb{P},S)\in\bar{\mathcal{P}}^1(\chi)}\mathbb{E}_\mathbb{P}((Q_{\sigma\cdot}\cdot S)_\chi),
 \]
 where
 \[
  (Q_{\sigma\cdot}\cdot S)_\chi = \sum_{t=0}^\sigma\chi_tY_t\cdot S_t + \sum_{t=\sigma+1}^T\chi_tX_\sigma\cdot S_t
\]
and $T=1$. For $\sigma=1$, the calculation \eqref{eq:ex:counter:1} gives that
\[
 \max_{\chi\in\mathcal{X}}\max_{(\mathbb{P},S)\in\bar{\mathcal{P}}^1(\chi)}\mathbb{E}_\mathbb{P}((Q_{1\cdot}\cdot S)_\chi)
 = \max_{\chi\in\mathcal{X}}\max_{(\mathbb{P},S)\in\bar{\mathcal{P}}^1(\chi\wedge1)}\mathbb{E}_\mathbb{P}((Q_{1\cdot}\cdot S_{1\wedge\cdot})_\chi) = 0.
\]
For $\sigma=0$ and every $\chi\in\mathcal{X}$ and $(\mathbb{P},S)\in\bar{\mathcal{P}}^1(\chi)$ we have
\begin{align*}
 (Q_{\sigma\cdot}\cdot S)_\chi
 &= \chi_0Y_0\cdot S_0 + \chi_1X_0\cdot S_1 = \chi_0(S^2_0-5)+(1-\chi_0)S_1^2-15,
\end{align*}
so that
\begin{align*}
 \mathbb{E}_\mathbb{P}((Q_{0\cdot}\cdot S)_\chi) = \chi_0(S^2_0-5)+(1-\chi_0)(3\mathbb{P}(\mathrm{u})+9)-15.
\end{align*}
This means that
\begin{multline*}
 \max_{\chi\in\mathcal{X}}\max_{(\mathbb{P},S)\in\bar{\mathcal{P}}^1(\chi)}\mathbb{E}_\mathbb{P}((Q_{0\cdot}\cdot S)_\chi) \\
 = \max\{\chi_0(S^2_0-5)+(1-\chi_0)(3\mathbb{P}(\mathrm{u})+9)-15\\
 :\chi_0\in[0,1],S_0^2\in[10,13],\mathbb{P}(\mathrm{u})\in[\tfrac{1}{3},1]\}  = -3.
\end{multline*}
Finally,
\[
 V^a = \min\{0,-3\}=-3\neq\pi^a_1(Y,X,Y),
\]
which demonstrates that $V^a$ is not the correct dual representation for $\pi^a_1(Y,X,Y)$.
\end{example}

\begin{example} \label{ex:three-asset}
 Consider a three-currency model with $T$ steps and time horizon $1$ based on the two-asset recombinant Korn-Muller model \citep{korn_muller2009} with Cholesky decomposition, i.e.~friction-free exchange rates in terms of currency $3$ are modelled by the process $(S_t)_{t=0}^T$, where
 \[
  S_t = \threevector{\varepsilon^1_tS^1_{t-1}}{\varepsilon^2_tS^2_{t-1}}{1} \text{ for }t=1,\ldots,T
 \]
 and $(\varepsilon_t)_{t=1}^T = \left(\twovector{\varepsilon^1_t}{\varepsilon^2_t}\right)_{t=1}^T$ is a sequence of independent identically distributed random variables taking the values
\begin{align*}
 \twovector{e^{-\frac{1}{2}\sigma_1^2\Delta-\sigma_1\sqrt\Delta}}{e^{-\frac{1}{2}\sigma_2^2\Delta-(\rho+\sqrt{1-\rho^2})\sigma_2\sqrt\Delta}},\\
  \twovector{e^{-\frac{1}{2}\sigma_1^2\Delta-\sigma_1\sqrt\Delta}}{e^{-\frac{1}{2}\sigma_2^2\Delta-(\rho-\sqrt{1-\rho^2})\sigma_2\sqrt\Delta}},\\
 \twovector{e^{-\frac{1}{2}\sigma_1^2\Delta+\sigma_1\sqrt\Delta}}{e^{-\frac{1}{2}\sigma_2^2\Delta+(\rho-\sqrt{1-\rho^2})\sigma_2\sqrt\Delta}},\\
  \twovector{e^{-\frac{1}{2}\sigma_1^2\Delta+\sigma_1\sqrt\Delta}}{e^{-\frac{1}{2}\sigma_2^2\Delta+(\rho+\sqrt{1-\rho^2})\sigma_2\sqrt\Delta}},
\end{align*}
each with positive probability, where $\Delta:=\frac{1}{T}$ is the step size. The exchange rates with transaction costs are modelled as
\[
 \pi^{ij}_t:=
 \begin{cases}
 \frac{S^j_t}{S^i_t}(1+k) &\text{if }i\neq j,\\
 1 & \text{if }i=j,
 \end{cases}
\]
for $i,j=1,\ldots,3$ and $t\le T$, where $k\in[0,1)$. We take
\begin{align*}
 \twovector{S^1_0}{S^2_0} &= \twovector{40}{50}, & \sigma_1&=0.15, & \sigma_2&=0.1, & \rho&=0.5.
\end{align*}

Consider a game put option with physical delivery on a basket containing one unit each of currencies~$1$ and $2$ and with strike $K$ in currency $3$, i.e.
\[
 Y_t=\threevector{-1}{-1}{K}\text{ for }t=0,\ldots,T.
\]
On cancellation the seller delivers the above payoff to the buyer, together with a cancellation penalty $p\ge0$ in currency $3$, so that
\[
 X_t=X'_t = \threevector{-1}{-1}{K+p}\text{ for }t=0,\ldots,T.
\]
We allow for the possibility that the seller may choose not to cancel the option, and the buyer may choose not to exercise, by adding an additional time step $T+1$ and taking
\[
 Y_{T+1} = X_{T+1} = X'_{T+1}=\threevector{0}{0}{0}.
\]

Except for the union, the operations in Constructions \ref{constr:writer} and \ref{const:buyer}, namely intersection and direct addition of a polyhedral cone, are standard geometric procedures when applied to polyhedra, and can be implemented using existing software libraries. Both these operations are union-preserving, and so the extension to unions of polyhedra is straightforward. The numerical results below were produced using Maple with the \emph{Convex} package \citep{Franz2009}. 

Table \ref{tab:ex:strikes} contains bid and ask prices in currency $3$ of the basket put with penalty $p=5$ for a range of strike prices. Both bid and ask prices increase with the strike. Note the appearance of negative bid and ask prices for out-of-the-money options (i.e.~$K<S^1_0+S^2_0=90$). The reason for this is that the seller can cancel the option at any time, and cancellation tends to be particularly attractive to the seller (and very costly for the buyer) when the option is far out of the money; for example, if $K=80$ then, from the point of view of the seller, cancelling the option at time $0$ is equivalent to receiving the basket from the buyer and paying $K+p=85$ in currency $3$, which is less than the market price $S^1_0+S^2_0=90$ (ignoring transaction costs) of the basket.
\begin{table}
 \centering
 \begin{tabular}{|r|d{6}d{6}|d{6}d{6}|}
 \hline
 & \multicolumn{2}{c|}{$k=0$} & \multicolumn{2}{c|}{$k=0.005$} \\
 $K$ & \multicolumn{1}{c}{$\pi^b_3$} & \multicolumn{1}{c|}{$\pi^a_3$} & \multicolumn{1}{c}{$\pi^b_3$} & \multicolumn{1}{c|}{$\pi^a_3$} \\
 \hline
100 & 10.043290 & 11.033942 & 9.568590 & 11.687749 \\
95 & 5.266479 & 6.817389 & 4.706543 & 7.480028 \\
90 & 0.967824 &2.774598 & 0.367975 & 3.423798 \\
85 & -2.934360 & -1.091048 & -3.587214 & -0.444708 \\
80 & -6.910514 & -5.131149 & -7.584034 & -4.614029 \\
\hline
  \end{tabular}
  \caption{Bid and ask prices of game basket put option with $N=10$, $p=5$ and different strikes $K$, Example \ref{ex:three-asset}}
  \label{tab:ex:strikes}
\end{table}

Bid and ask prices of the game basket put with strike $K=100$ and a range of penalty values are reported in Table \ref{tab:ex:penalties}, together with bid and ask prices of the American basket put with the same strike (using the constructions of \cite{Roux_Zastawniak2015}). In practical terms, game options with large penalties resemble American options (because larger penalties make it less attractive for the seller to cancel the option early), and this explains the convergence of the bid and ask prices of the game option to that of the American option as the penalty increases.
\begin{table}
 \centering
 \begin{tabular}{|r|d{6}d{6}|d{6}d{6}|}
 \hline
 & \multicolumn{2}{c|}{$k=0$} & \multicolumn{2}{c|}{$k=0.005$} \\
 $p$ & \multicolumn{1}{c}{$\pi^b_3$} & \multicolumn{1}{c|}{$\pi^a_3$} & \multicolumn{1}{c}{$\pi^b_3$} & \multicolumn{1}{c|}{$\pi^a_3$} \\
 \hline
0 & 10.000000 & 10.000000 & 9.550000 & 10.447761 \\
1 & 10.014709 & 10.278348 & 9.556726 & 10.790910 \\
2 & 10.027095 & 10.497310 & 9.562075 & 11.051351 \\
5 & 10.043290 & 11.033942 & 9.568590 & 11.687749 \\
10 & 10.050958 & 11.571315 & 9.571850 & 12.297913 \\
20 & 10.052026 & 11.796921 & 9.572414 & 12.575621 \\
\hline
American & 10.052027 & 11.812658 & 9.572414 & 12.589930 \\
\hline
  \end{tabular}
  \caption{Bid and ask prices of game basket put option with $N=10$, $K=100$ and different penalties, Example \ref{ex:three-asset}}
  \label{tab:ex:penalties}
\end{table}
\end{example}

\end{document}